\newcolumntype{P}[1]{>{\centering\arraybackslash}p{#1}}
\newcolumntype{M}[1]{>{\centering\arraybackslash}m{#1}}
\newtheorem{definition}{Definition}[section]
\newtheorem{proposition}{Proposition}[section]
\newtheorem{theorem}{Theorem}[section]
\newtheorem{corollary}{Corollary}[section]
\newtheorem{problem}{Main Problem}
\newtheorem{conjecture}{Conjecture}[section]
\newcommand{\mF}{\mathcal{F}}
\newcommand{\mM}{\mathcal{M}}
\newcommand{\mbE}{\mathbb{E}}
\newcommand{\mbZ}{\mathbb{Z}}
\DeclareMathOperator{\area}{area}
\title{Towards a complete classification of holographic entropy inequalities}
\author[a,b]{Ning Bao}
\author[a]{Keiichiro Furuya}
\author[a,c]{Joydeep Naskar}
\affiliation[a]{Department of Physics, Northeastern University, Boston, MA, 02115, USA}
\affiliation[b]{Computational Science Initiative, Brookhaven National Laboratory, Upton, NY 11973 USA}
\affiliation[c]{The NSF AI Institute for Artificial Intelligence and Fundamental Interactions, Cambridge, MA, U.S.A.}
\emailAdd{ningbao75@gmail.com}
\emailAdd{k.furuya@northeastern.edu}
\emailAdd{naskar.j@northeastern.edu}
\abstract{We propose a deterministic method to find all holographic entropy inequalities that have corresponding contraction maps and argue the completeness of our method. We use a triality between holographic entropy inequalities, contraction maps and partial cubes. More specifically, the validity of a holographic entropy inequality is implied by the existence of a contraction map, which we prove to be equivalent to finding an isometric embedding of a contracted graph. Thus, by virtue of the argued completeness of the contraction map proof method, the problem of finding all holographic entropy inequalities is equivalent to the problem of finding all contraction maps, which we translate to a problem of finding all image graph partial cubes. We give an algorithmic solution to this problem and characterize the complexity of our method. We also demonstrate interesting by-products, most notably, a procedure to generate candidate quantum entropy inequalities.}
\gdef\@fpheader{}
\begin{document}

\maketitle

\section{Introduction}
Entropy inequalities constrain quantum states non-trivially. The first such inequality discovered was subadditivity, which constrains the entropies of the quantum states of a bipartite system and its components\cite{Araki:1970ba},
\begin{equation}\label{eq:sa}
    S(A)+S(B)\geq S(AB) \quad \quad \text{(Subadditivity)}.
\end{equation}
In addition to subadditivity, there are other inequalities that are obeyed by all quantum states. We list some of them below \cite{Araki:1970ba,ssa1973,weakmono2003,Headrick_2007},
%\begin{equation}
\label{eq:quantum-ineqs}
\begin{align}
    & S(A)+S(AB)\geq S(B) \quad \quad \text{(Araki-Lieb)} \label{eq:AL},\\
    & S(AB)+S(BC) \geq S(B)+S(ABC)  \quad \quad \text{(Strong subadditivity)}\label{eq:ssa},\\
    & S(AB)+S(BC) \geq S(A)+S(C) \quad \quad \text{(Weak monotonicity)}\label{eq:wm}.
\end{align}
%\end{equation}

Notably, the above inequalities are the only unconditional linear inequalities known to be obeyed by all quantum states. They enjoy extended permutation symmetries $S_{n+1}$. In particular, subadditivity \eqref{eq:sa} and Araki-Lieb inequality \eqref{eq:AL} are related by $S_3$ transformations together with symmetry by a purification, e.g., $S_{BO} = S_{A}$. Similarly, weak monotonicity \eqref{eq:wm} can be obtained from strong subadditivity \eqref{eq:ssa} by $S_4$.

A \emph{holographic entropy inequality}(HEI) is a non-trivial constraint on entanglement entropies of holographic states, a proper subset of all quantum states, at leading order. Holographic states are those quantum states in conformal field theory that are consistent with the existence of a semi-classical gravity dual, via the $AdS/CFT$ correspondence\cite{Maldacena1997}. More precisely, the Ryu-Takayanagi\cite{Ryu2006} formula, a formula that relates the entanglement entropy of a boundary subregion $A$ to a bulk geometric quantity, namely, the minimal surface $\gamma_A$ in the bulk homologous to $A$, i.e.,
\begin{equation}\label{eq:RTformula}
    S(A)=\frac{\area\gamma_A}{4G_N},
\end{equation}
applies for such states at leading order.

The simplest holographic entropy inequality that is not obeyed by all quantum states is the monogamy of mutual information(MMI)\cite{MMI2013} involving three parties, given by
\begin{equation}\label{eq:mmi}
    S(AB)+S(AC)+S(BC)\geq S(A)+S(B)+S(C)+S(ABC), \quad \quad \text{(MMI)}.
\end{equation}

MMI (\ref{eq:mmi}) taken together with subadditivity (\ref{eq:sa}) inequalities between subsystem pairs $AB$, $AC$ and $BC$ form the \emph{holographic entropy cone}(HEC)\cite{Bao:2015bfa} for three regions $\mathcal{C}_3$. The work of \cite{Bao:2015bfa} formalised the concept of the HEC and characterized it as a rational polyhedral cone, whose facets are identified as the minimal set of \emph{tight}\footnote{A HEI is a facet of the HEC iff there exists a codimension-1 set of linearly independent holographic entropy vectors for which the HEI saturates. We will use the terms facet HEI and tight HEI interchangeably.} inequalities for a given number of parties. In the same work, five new inequalities were discovered that describe the facets of the HEC for five regions $\mathcal{C}_5$. It was verified in \cite{HernandezCuenca:2019wgh} that these five inequalities (upto symmetry) taken together with the lower-party inequalities completely characterize $\mathcal{C}_5$. We give two examples of five-party inequalities below,
\begin{equation}\begin{split}
& S(ABC) + S(ABD) + S(ACE) + S(BCD) + S(BCE) \geq \\ & S(A) + S(BC) + S(BD) + S(CE) + S(ABCD) + S(ABCE),\label{eq:5.1}
\end{split} \end{equation}
\begin{equation}\begin{split}
& S(AD) + S(BC) + S(ABE) + S(ACE) + S(ADE) + S(BDE) + S(CDE) \geq \\ & S(A) + S(B) + S(C) + S(D) + S(AE) + S(DE) + S(BCE) + S(ABDE) + S(ACDE).\label{eq:5.2}
\end{split} \end{equation}

Over the past few years, several interesting directions emerged in the quest for discovering higher party HEIs to characterize $\mathcal{C}_n (n\geq 6)$. The work of \cite{Hernandez-Cuenca:2023iqh} discovered 1877 novel six-party inequalities, where the search is motivated by the observation that all known tight HEIs (except subadditivity\footnote{The subadditivity inequality is only balanced; it is neither superbalanced, nor expressible as a sum over (conditional) tripartite information.}) are superbalanced\cite{He:2020xuo} and can be written as a sum of tripartite and conditional tripartite information, conveniently expressible in the $I$-basis\cite{He:2019repackaged}. A parallel development was the construction of the \emph{holographic cone of average entropies}(HCAE)\cite{Czech:2021rxe} which motivated the discovery of two additional families of HEIs\footnote{The first family of cyclic inequalities was discovered in \cite{Bao:2015bfa} valid for any odd number of parties $p$. For example, inequality \ref{eq:5.1} is the $p=5$ case.} exploiting entanglement wedge nesting relations\cite{Czech:2023xed, Bao:2024vmy}. Such geometrization of HEIs provided excellent opportunities to study various qualitative features of (a family of) HEIs. Another method for studying HEIs involved the use of bit-threads \cite{Freedman_2016,headrick2022covariant, cui2019}. Several other works appeared in the recent years to study various aspects of the HEC\cite{Hernandez-Cuenca:2022marginal,Fadel:2021urx,Akers:2021lms, Czech:2024rco} and beyond the HEC, such as using hypergraphs\cite{Bao:2020zgx,Bao:2020uku}, topological link models\cite{Bao:2021gzu} and cycle flows\cite{He:2023rox}, studying gapped phases of matter\cite{Bao:2015boa,Naskar:2024mzi}, to name a few.\footnote{In all of these works, the entanglement entropy considered are \emph{bipartite entanglement measures} between a sub-region $X$ and its complement $\Bar{X}$ calculated for various multipartite arrangements of subregion $X$. Some other works concerning multipartite entropies calculated using \emph{multipartite entanglement measures} include \cite{Gadde:2022cqi,Gadde:2023zzj,Penington:2022dhr,Harper:2024ker}.}

Despite these developments, the task of fully characterizing all possible HEIs has proven elusive. Two key challenges stand in the way of this task. First, there is a combinatorial explosion associated with generating candidate inequalities. Second, one must develop methods, also often with similar combinatorial challenges, in order to prove a given conjectured inequality. One method of proving a HEI is to find a corresponding \emph{contraction map} (which we define in subsection \ref{subsec:contraction-map}). Until recently, the computational complexity of finding such maps was doubly exponential in the number of terms on the left hand side (LHS) using greedy approaches. The work of \cite{Bao:2024contraction_map}\footnote{Another complementary approach to speed-up the contraction map method can be found in \cite{Li:2022jji,AVIS202316}.} reduced this complexity significantly to a single exponential and demonstrated empirical evidence in favor of the completeness of the contraction map proof method.

In this paper, we take a step towards a \emph{complete}\footnote{We would like to think that the completeness argument is true, but a careful reader, should take our results to be valid for all HEIs having corresponding contraction maps.} classification of HEIs. However, we step aside from the conventional classification of HEIs into classes of HEC $\mathcal{C}_n$ of a fixed party number $n$. Instead, we classify HEIs into the classes $\mathcal{H}_M$ identified by the number of LHS terms (with unit coefficient) $M$. For example, the five party inequalities \ref{eq:5.1} and \ref{eq:5.2} both belong to $\mathcal{C}_5$ but they have different numbers of LHS terms. In our classification, they will correspond to two different classes, $\mathcal{H}_5$ and $\mathcal{H}_7$ respectively.

The organization of this paper is as follows. We outline the problem statement motivated by the holographic entropy cone program in \ref{subsec:phy-problem}. In section \ref{subsec:contraction-map}, we define the \textit{`proof by graph contraction'} method after reviewing the \textit{`proof by contraction'} method and introducing several key concepts, such as partial cubes and graph contraction maps, from graph theory. In section \ref{subsec:math-problem}, we rephrase the problem statement in binary expressions and graph theory. Moreover, we prove the equivalence of the problem statements by showing that the existence of a contraction map is a necessary and sufficient condition for that of a graph contraction map. In section \ref{subsec:solutions-graph}, we give a basic framework of the algorithm as a solution to the problem statement in graph theory and argue their completeness. More specifically, we show that all relevant graphs can be constructed from \emph{graph contraction} of a hypercube graph. %and a valid HEI corresponds to the resultant graph being a partial cube or isometrically hypercube embeddable. 
In section \ref{sec:read-HEI}, we give an algorithm to read off HEIs from a given contraction map (derived from a graph). Lastly, we discuss our results, their implications, and the future directions in section \ref{sec:discussions}.

\section{The problem statement: Find all holographic entropy inequalities}\label{sec:problem-statements}
In general, a $n$-party holographic entropy inequality (HEI) involving $n$ disjoint regions $[n]:=\{A_1, \cdots, A_n\}$ (and a purifier $O$), can be written in a basis of subregion entropies, constructed by their proper power set $P(\{A_1, \cdots, A_n\}) \backslash \emptyset$ containing $2^n-1$ possible elements. For example, in the case of the three regions $\{A,B,C\}$, the entropy basis is lexicographically written as
\begin{equation}
    \label{eq:basis}
    \{S_A, S_B, S_C, S_{AB}, S_{AC}, S_{BC}, S_{ABC}\}
\end{equation}
An inequality $\mathcal{Q}$ can be expressed as
\begin{equation}
    \label{eq:hei-q}
    \mathcal{Q}=\sum_{i=1}^{2^n-1} a_i S_{\tilde{X}_i} \geq 0,
\end{equation}
where $a_i$ is an integer coefficient associated with an entropy term $S_{\tilde{X}_i}$ corresponding to subregion $\tilde{X}_i$. Here, $a_i=0$ implies that the corresponding subregion entropy $S_{\tilde{X}_i}$ is absent from the inequality, whereas a positive (negative) $a_i$ implies that $S_{\tilde{X}_i}$ is present in the LHS (RHS). We can rearrange \eqref{eq:hei-q} to include only non-zero coefficients, to get an inequality of the form 
\begin{equation} \label{eq:genentineq}
    \sum_{i=1}^{l} c_i S_{X_i} \geq \sum_{j=1}^{r} d_j S_{Y_j},
\end{equation}
where $c_i$ and $d_j$ are positive integers. The number of non-zero coefficients on the LHS and RHS are $l$ and $r$, respectively. We would like to expand each coefficient, and re-write the inequality with repeating subregion entropies $S_{X_i}$ and $S_{Y_j}$ with unit coefficient, giving us
\begin{equation} \label{eq:genentineq-expand}
    \sum_{u=1}^{M} S_{L_u} \geq \sum_{v=1}^{N} S_{R_v}.
\end{equation}
where $L_u, R_v \in P(\{A_1, \cdots, A_n\}) \backslash \emptyset$ for $\forall u,v$. The number of terms (after expanding the coefficients) are $M$ and $N$ respectively, i.e.,

\begin{equation}
    \label{eq:lMrN}
    \sum_{i=1}^{l} c_i=M \quad \text{and} \quad \sum_{j=1}^r d_j= N.
\end{equation}

\subsection{Problem statement in physics}\label{subsec:phy-problem}
The thematic problem statement of our work is to find all holographic entropy inequalities.
\begin{tcolorbox}[colback=blue!5!white, colframe=blue!75!black, title=Thematic Problem, fonttitle=\bfseries]
Find all holographic entropy inequalities that have corresponding contraction maps.
\end{tcolorbox}

We begin by fixing the number of LHS terms of our inequality (not the terms), i.e, we are fixing $M$. 
Since $M$ is arbitrary, finding all possible HEIs for a given $M$ allows us to find all possible HEIs. Now we are ready to define our main problem statement.
\begin{problem}[Physics statement]\label{prob:phys}\
Consider a set of subregions. Given a convex composition of entropies with positive integer coefficients $c_i\geq 0 $ for $\forall i=1,\cdots,l$, 
     \begin{equation}\label{eq:lhs-mp}
         c_1 S_{X_1} +\cdots + c_{l} S_{X_{l}}.
     \end{equation}
     Find all the possible convex combinations of entropies with positive integer coefficients\footnote{It suffices to consider integer coefficients because the HEC is a rational polyhedral cone\cite{Bao:2015bfa}.} $d_j \geq 0 $ for $\forall j=1,\cdots,r$,
     \begin{equation}
         d_1 S_{Y_1} +\cdots + d_r S_{Y_{r}},
     \end{equation}
     such that
     \begin{equation}\label{eq:ineq-mp}
         c_1 S_{X_1} +\cdots + c_l S_{X_{l}} \geq d_1 S_{Y_1} +\cdots + d_r S_{Y_{r}}
     \end{equation}
    is a valid HEI.%\footnote{Note that due to results from \cite{Bao:2015bfa} showing that the cone is rational and polyhedral, only positive integer coefficients must be considered.}
\end{problem}
We can simplify this problem a little bit to fit our methods by expanding the coefficients of the $l$ terms on the LHS, such that there are $M$ entropy terms on the LHS with unit coefficient (see equation \ref{eq:lMrN}).\footnote{Note that when the coefficients $c_i>1$, there are repetition of terms after expanding.} We have repeating terms in this case, however, it suffices to stick to unit coefficients with non-repeating LHS terms by introducing more parties\footnote{All such LHS with $c_i>1$ can be uplifted from an $n$-party expression to a $(n+k)$-party expression, having non-repeating terms with unit coefficients. One can always do the reduction by trivializing those $k$ parties, and go back to the $n$-party expression.}. From now on, all the LHS we will refer to have unit coefficients. We thus reformulate our main problem \ref{prob:phys} in the box \ref{prob:phys-simple}. Note that this reformulation of the problem is equivalent to the original one.\\
\setcounter{problem}{0}
\begin{tcolorbox}[colback=blue!5!white, colframe=blue!75!black, title=Main Problem 1 (Unimodular), fonttitle=\bfseries]
\begin{problem}[Physics statement: Unimodular] \label{prob:phys-simple}
Given $M$ LHS terms with unit coefficients, 
\begin{equation}\label{eq:LHS-unit-coeff}
    S_{L_1} +\cdots + S_{L_{M}}.
\end{equation}
Find all possible RHS terms, such that the following inequality holds
\begin{equation}
    S_{L_1} +\cdots + S_{L_M} \geq S_{R_1} +\cdots + S_{R_N},
\end{equation}
for some $N$ with unit coefficients.
    
\end{problem}
\end{tcolorbox}

We will give an algorithmic recipe to answer this problem in the language of graph theory. The LHS comprising $M$ terms (\ref{eq:LHS-unit-coeff}) can be mapped to a hypercube graph $H_M$ of $2^M$ binary bitstrings.
One then performs graph contractions\cite{Erciyes2018} on this hypercube graph $H_M$, resulting in a contracted graph $G$. Since the number of possible graph contractions is finite (albeit, large), the number of contracted graphs is also finite. Isometric hypercube embeddability of this resultant graph $G$ into a hypercube $H_N$, should it exist, would define a contraction map from $\{0,1\}^M$ to $\{0,1\}^N$. Given a contraction map, one can find the HEIs corresponding to the contraction map by assigning boundary conditions. Thus, the problem of finding all such HEIs can be framed as a problem of finding all contraction maps. We show the problem of finding all contraction maps is equivalent to determining which subset of graphs $G$ are partial cubes, where the graphs $G$ are obtained by graph contractions starting from $H_M$.

In the rest of this paper, we will prove that this method is sufficient to generate all possible HEIs corresponding to contraction maps. We will give a more formal mathematical presentation of the problem statements below after introducing relevant definitions.  We summarize our algorithmic method in algorithm \ref{alg:all-hei} in subsection \ref{subsec:solutions-graph}.

\subsection{Contraction maps and graph contraction maps}
\label{subsec:contraction-map}
The \textit{`proof by contraction'} method was proposed in \cite{Bao:2015bfa} to prove the validity of a candidate HEI (\ref{eq:genentineq}). This method relies on the equivalence between holographic geometries and graphs, labelled by bitstrings. In principle, this method is based on the inclusion/exclusion of bulk regions and extracts the information of contributions of the RT surfaces to the convex compositions of the RT entropies and encodes them into bitstrings. 

A particularly relevant set of bitstrings are those labeling regions adjacent to boundary subsystems; for each $i\in[n+1]$,
\begin{equation}\label{eq:occurence-bitstrings}
    x^u_{A_i} = \begin{cases}
        1, \qquad \text{if} \quad A_i\in L_u \\
        0, \qquad \text{otherwise,}
    \end{cases}
\end{equation}
where the ($n+1$)-th bitstring is all zeroes and is associated with the purifier $O$. These bitstrings are the boundary conditions (also known as \emph{occurrence bitstrings}) and can be analogously defined for RHS subsystems using bitstrings $y\in\{0,1\}^N$.

A contraction map is a map between bitstrings from $\{0,1\}^M$ to $\{0,1\}^N$ such that the Hamming distance between every pair of bitstrings in $\{0,1\}^M$ is greater than or equal to the Hamming distance between their images in $\{0,1\}^N$. A contraction map learns about a particular inequality through the boundary conditions. We will now define these ideas more formally below.

\begin{definition}[Hamming distance]
The Hamming distance between two bitstrings of length $L$ is defined as the number of positions where the bitstrings differ, i.e.,
\begin{equation}
    d_H(x,x') = \sum_{u=1}^L |x^u - x'^{u}|.
\end{equation}
where $x^u$ is the $u$-th bit.
\end{definition}

\begin{theorem}[`Proof by contraction']\label{thm:proofbycontraction}\cite{Bao:2015bfa}
    Let $f:\{0,1\}^M \to \{0,1\}^N $ be a contraction map, i.e.,
    \begin{equation} \label{eq:contractioncondition}
        d_H(x ,x') \geq d_H(f(x) ,f(x')) ,\; \forall x,x'\in \{0,1\}^M.
    \end{equation}
    If  $f(x_{A_i})  = y_{A_i}$ for $\forall i\in \{1,\cdots, n+1\}$, then (\ref{eq:genentineq-expand}) is a valid $n$-party HEI.
\end{theorem}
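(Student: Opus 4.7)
The plan is to use the Ryu--Takayanagi formula to translate the entropy inequality (\ref{eq:genentineq-expand}) into an area inequality in the bulk, and then use the contraction map $f$ to build explicit candidate surfaces for the RHS subsystems. First I would fix a holographic state and consider its bulk dual together with the minimal surfaces $\gamma_{L_u}$ for $u = 1, \dots, M$. The union of these $M$ surfaces partitions the bulk into cells, each of which can be labeled by a bitstring $x \in \{0,1\}^M$ whose $u$-th bit records whether the cell lies on the $L_u$-side of $\gamma_{L_u}$ or on the complementary side. Rewriting the LHS of (\ref{eq:genentineq-expand}) as $\frac{1}{4G_N}\sum_u \area(\gamma_{L_u})$, each piece of interface between two adjacent cells with labels $x$ and $x'$ is counted once for every bit in which $x$ and $x'$ differ, so the total LHS area decomposes as a sum over interfaces weighted by the Hamming distance $d_H(x,x')$.

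Next I would use the contraction map to construct a candidate surface $\tilde{\gamma}_{R_v}$ for each RHS subsystem $R_v$ as the union of those interfaces between cells labelled $x$ and $x'$ for which $f(x)^v \neq f(x')^v$. The boundary condition $f(x_{A_i}) = y_{A_i}$ is what glues this purely combinatorial construction to the physical subsystems: a bulk point infinitesimally close to a boundary region $A_i$ lies in the cell whose label is precisely the occurrence bitstring $x_{A_i}$ defined in (\ref{eq:occurence-bitstrings}), so along the conformal boundary the $v$-th coordinate of $f$ flips exactly when we cross $\partial R_v$. This ensures that $\tilde{\gamma}_{R_v}$ is a bulk surface with the correct boundary, homologous to $R_v$. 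A parallel interface-counting argument shows that $\sum_v \area(\tilde{\gamma}_{R_v})$ decomposes as a sum over the same interfaces, now weighted by $d_H(f(x), f(x'))$.

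Finally, the contraction condition (\ref{eq:contractioncondition}) applied interface by interface gives
\begin{equation}
    \sum_{u=1}^{M} \area(\gamma_{L_u}) \;\geq\; \sum_{v=1}^{N} \area(\tilde{\gamma}_{R_v}),
\end{equation}
and the minimality of $\gamma_{R_v}$ among bulk surfaces homologous to $R_v$ yields $\area(\tilde{\gamma}_{R_v}) \geq \area(\gamma_{R_v})$ for each $v$. Dividing by $4G_N$ and invoking the RT formula (\ref{eq:RTformula}) produces the desired inequality on every holographic state. The main obstacle, and the step that deserves the most care, is the homology check: one must argue that $\tilde{\gamma}_{R_v}$ is not merely a surface with the right asymptotic boundary but genuinely sits in the correct homology class relative to $R_v$, and that the bulk cell decomposition by the $\{\gamma_{L_u}\}$ behaves well at mutual intersections (possibly after a small generic perturbation). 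This is precisely where the $\{0,1\}^M$ labeling carries its weight, turning a subtle geometric condition into a clean combinatorial one that is automatically respected by any contraction map consistent with the boundary data.
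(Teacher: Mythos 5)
Your proposal is correct and follows essentially the same argument as the proof in \cite{Bao:2015bfa}, which the present paper only cites rather than reproves: partition the bulk time slice into cells labeled by $\{0,1\}^M$ according to the homology regions of the $\gamma_{L_u}$, count the LHS area interface-by-interface with Hamming weight, build candidate RHS surfaces from $f$, and conclude by minimality of the true RT surfaces. The homology subtlety you flag is resolved there exactly as your construction suggests, by taking $\tilde{\gamma}_{R_v}$ to be the boundary (minus the asymptotic boundary) of the union of cells with $f(x)^v=1$, so that homology to $R_v$ is automatic once the boundary conditions $f(x_{A_i})=y_{A_i}$ are imposed.
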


We can rephrase the language of bitstrings into its natural habitat in graph theory by reinterpreting the bitstrings $\{0,1\}^M$ to be a $M$-dimensional hypercube, which is mapped to a subset of $\{0,1\}^N$ satisfying the contraction condition (\ref{eq:contractioncondition}) to be a subgraph of $N$-dimensional hypercube.

Consider two unit-weighted and undirected graphs, $G_1=(V_1,E_1)$ and $G_2=(V_2,E_2)$. In general, the graph map $\Phi:=(\phi^V,\phi^E)$ from $G_1$ to $G_2$ consists of a map $\phi^V:V_1 \to V_2$ and $\phi^E:E_1\to E_2$. A weak graph homomorphism $\Phi$ is a graph map that either preserves the adjacency $(\phi^V(v),\phi^V(v'))\in E_2$ or contracts the vertices $\phi^V(v)=\phi^V(v')$ for $v,v'\in V_1$ and $(v,v')\in E_1$\cite{hammack2011}. Hence, it does not increase the graph distance. A graph distance $d(v,v')$ is the shortest path or the minimum number of edges connecting between the vertices $v,v'\in V$\cite{Bondy-Murty}.

We define a \textit{graph contraction map} to be a weak graph homomorphism to construct two alternative formulations below. Going forward, we will only use the term \textit{graph contraction map} for clarity.

\begin{definition}[A graph contraction map/weak graph homomorphism]\

    For a graph $G=(V,E)$, a graph map $\Phi: G\to \Phi(G)$ is a graph contraction map if
    \begin{equation}
        d(v,v') \geq d(\phi^V(v),\phi^V(v')), \;\forall v,v'\in V.
    \end{equation}
    
\end{definition}

From theorem \ref{thm:proofbycontraction}, the domain of graph contraction maps is always an $M$-dimensional hypercube $H_M$. Moreover, by virtue of the bitstrings picture, the images $\Phi(H_M)$ of the graph contraction maps $\Phi$ are subgraphs of the $N$-dimensional hypercube $H_N$. In fact, the subgraphs $\Phi(H_M)$ should be \textit{isometrically} embeddable in $H_N$\footnote{A graph $G$ is isometrically embeddable to another graph $G'$ if the adjacency and the graph distance of all pairs of vertices of $G$ are preserved.}. Such graphs are called \textit{partial cubes}. We define a partial cube and its isometric dimension below. 
\begin{definition}[Partial cube\cite{Ovchinnikov,WINKLER1984221}]
    A graph $G$ is a partial cube if $G $ is isometrically embeddable to a $D$-dimensional hypercube graph $H_D$.  
\end{definition}

\begin{definition}[Isometric dimension \cite{Ovchinnikov}]
    The isometric dimension $idim(G)$ of a partial cube $G$ is the minimum dimension of a cube in which $G$ is isometrically embeddable.
\end{definition}
A $M$-dimensional hypercube graph is the trivial partial cube of isometric dimension $idim(H_M)=M$. We denote $\Phi: H_M\to H_N$ if a graph map is from a $M$-dimensional hypercube $H_M$ to a partial cube with $idim(\Phi(H_M))=N$.

Correspondingly to the occurrence bitstrings(\ref{eq:occurence-bitstrings}), we define \textit{occurence vertices} or \textit{boundary vertices} as 
\begin{equation}\label{eq:occurence-vertices}
    \big(\iota_M(v_{A_i})\big)^u = \begin{cases}
        1, \qquad \text{if} \quad A_i\in L_u \\
        0, \qquad \text{otherwise,}
    \end{cases}
\end{equation}
where $\iota_M:V_M \to \{0,1\}^M$ is an isometry\footnote{All the \textit{isomteries} in this paper are \textit{path isometries}, which are distance-preserving maps between metric spaces. In particular, they map between a metric space of bitstrings $\{0,1\}^{M,N}$ equipped with a Hamming distance $d_H$ and a metric space of vertices $V_{M,N}$ equipped with a graph distance $d$. }, i.e.,
\begin{equation}
    d(v,v') = d_H(\iota_M(v),\iota_M(v')) 
\end{equation}
for $v,v'\in V_M$.

Now, we are ready to formulate the \textit{`proof by graph contraction'}, as follows.
\begin{theorem}[`Proof by graph contraction']\label{thm:proofbygraphcontraction}\cite{Bao:2015bfa}
    Let $\Phi:H_M=(V_M,E_M) \to H_N =(V_N,E_N)$ be a graph contraction map, i.e.,
    \begin{equation}
        d(v ,v') \geq d(\phi^V(v) ,\phi^V(v')) ,\; \forall v,v'\in V_M.
    \end{equation}
    If $\phi^V(v_{A_i})  = v^N_{A_i}$ for $\forall i\in \{1,\cdots, n+1\}$ where $v_{A_i}\in V_M$ and $v^N_{A_i} \in V_N$, then (\ref{eq:genentineq}) is a valid HEI.
\end{theorem}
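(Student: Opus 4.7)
The plan is to show that the \emph{proof by graph contraction} is nothing more than the \emph{proof by contraction} (Theorem~\ref{thm:proofbycontraction}) translated along the natural isometric identification between a hypercube graph and the Hamming cube. The key observation is that the hypercube $H_D$ is the trivial partial cube: the map $\iota_D:V_D\to\{0,1\}^D$ defined in (\ref{eq:occurence-vertices}) is a bijective isometry between $(V_D,d)$ and $(\{0,1\}^D,d_H)$, for both $D=M$ and $D=N$. So the strategy is to pull the graph contraction map $\Phi$ back to a bitstring contraction map $f$ and invoke Theorem~\ref{thm:proofbycontraction}.

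Concretely, given a graph contraction map $\Phi=(\phi^V,\phi^E):H_M\to H_N$ as in the hypothesis, I would define
\begin{equation}
    f:=\iota_N\circ\phi^V\circ\iota_M^{-1}:\{0,1\}^M\to\{0,1\}^N.
\end{equation}
Then for any $x,x'\in\{0,1\}^M$, setting $v=\iota_M^{-1}(x)$ and $v'=\iota_M^{-1}(x')$, one has
\begin{equation}
    d_H(x,x')=d(v,v')\geq d(\phi^V(v),\phi^V(v'))=d_H(f(x),f(x')),
\end{equation}
where the first and last equalities use the isometries $\iota_M$ and $\iota_N$, and the middle inequality is the assumed graph contraction condition. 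Hence $f$ satisfies the Hamming contraction condition (\ref{eq:contractioncondition}).

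Next I would verify that the boundary conditions transfer. By comparing the definitions (\ref{eq:occurence-bitstrings}) and (\ref{eq:occurence-vertices}), the occurrence bitstring $x_{A_i}$ equals $\iota_M(v_{A_i})$, and likewise $y_{A_i}=\iota_N(v^N_{A_i})$ on the image side. The hypothesis $\phi^V(v_{A_i})=v^N_{A_i}$ therefore yields
\begin{equation}
    f(x_{A_i})=\iota_N\bigl(\phi^V(\iota_M^{-1}(x_{A_i}))\bigr)=\iota_N(\phi^V(v_{A_i}))=\iota_N(v^N_{A_i})=y_{A_i},
\end{equation}
for every $i\in\{1,\dots,n+1\}$. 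With the contraction property and the boundary conditions in hand, Theorem~\ref{thm:proofbycontraction} applies to $f$ and certifies the inequality (\ref{eq:genentineq-expand}) as a valid HEI, which is what we wanted to prove.

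Because this argument is purely a dictionary translation, there is no real conceptual obstacle; the only thing to be careful about is that the graph distance $d$ appearing on the right-hand side of the contraction condition is the distance in the codomain $H_N$ (so that the identification with $d_H$ on $\{0,1\}^N$ is automatic), rather than the possibly larger intrinsic distance within the subgraph $\phi^V(V_M)\subseteq V_N$. As long as this convention is fixed, the proof reduces to the bookkeeping displayed above, and the converse direction (needed for the equivalence advertised in Section~\ref{subsec:math-problem}) follows by the same construction with $\iota_M,\iota_N$ inverted.
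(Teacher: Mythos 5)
Your proposal is correct and follows essentially the same route the paper intends: Theorem~\ref{thm:proofbygraphcontraction} is obtained from Theorem~\ref{thm:proofbycontraction} by the isometric dictionary between $(V_D,d)$ and $(\{0,1\}^D,d_H)$, which is precisely the construction the paper formalizes in the $(f\leftarrow\Phi)$ direction of the proof of Proposition~\ref{pro:equivalence-bit-graph} (defining $f_\Phi\circ\iota_M=\iota_N\circ\phi^V$ and transferring the contraction condition through the path isometries). Your added care about using the ambient distance in $H_N$ rather than the intrinsic distance of the image subgraph, and the explicit check that the occurrence vertices map to the occurrence bitstrings, are consistent with the paper's conventions and complete the argument.
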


\subsection{Problem statements in mathematics: bitstrings and graph theory}
\label{subsec:math-problem}
In this subsection, we define and discuss two alternative formulations of the main problem \ref{prob:phys-simple} as problems \ref{prob:binary} and \ref{prob:graph} and prove their equivalence. In the language of bitstrings, the main problem \ref{prob:phys-simple} is rephrased as follows.

\begin{tcolorbox}[colback=blue!5!white, colframe=blue!75!black, title=Main Problem 2 (Bitstrings), fonttitle=\bfseries]
\begin{problem}[Mathematical statement: Bitstrings]\label{prob:binary}
    
    Given a complete set of bitstrings\footnote{A complete set of bitstrings is the set of all possible $2^M$ binary bitstrings of length $M$.} $\{0,1\}^M$ for a fixed $M\in \mbZ_+$, find the set of all the possible contraction maps $\mF = \{f| f:\{0,1\}^M\to\{0,1\}^N\}$ for some $N\in \mbZ_+$, where $f$ maps the complete set of bitstrings in $\{0,1\}^M$ to a subset of bitstrings in $\{0,1\}^N$.
    
\end{problem}
\end{tcolorbox}

The equivalence between the problem \ref{prob:phys-simple} and the problem \ref{prob:binary} can be conjectured by theorem \ref{thm:proofbycontraction} and the \textit{completeness of `proof by contraction'}\cite{Bao:2024contraction_map} which argues that there exists at least one contraction map for every valid HEI.

In the graph-theoretic reformulation, we have the following problem statement.

\begin{tcolorbox}[colback=blue!5!white, colframe=blue!75!black, title=Main Problem 3 (Graph theory), fonttitle=\bfseries]
\begin{problem}[Mathematical statement: Graph theory]\label{prob:graph}
%$\mF_G = \{\Phi| \Phi : H_l \to \Phi(H_l),  \Phi(H_l) \hookrightarrow H_r \}$

    Given a $M$-dimensional hypercube $H_M$ for a fixed $M\in \mbZ_+$, find the set of all the possible graph contraction maps $\mF_G = \{\Phi|\Phi:H_M\to H_N\}$ such that the images $Im\mF_G = \{\Phi(H_M)\}$ are partial cubes of $dim(\Phi(H_M))=N$ for some $N\in \mbZ_+$.
    
\end{problem}
\end{tcolorbox}

We prove that problem \ref{prob:binary} is equivalent to the problem \ref{prob:graph} by the following proposition. %Therefore, all the three main problems \ref{prob:phys-simple}, \ref{prob:binary}, and \ref{prob:graph} are equivalent.

\begin{proposition}[Equivalence between contraction maps and graph contraction maps]\label{pro:equivalence-bit-graph}\ 
    Given $M\in \mbZ_+$, there exists a contraction map $f:\{0,1\}^M\to \{0,1\}^N$ for some $N\in \mbZ_+$ if and only if there exists a graph contraction map $\Phi:H_M \to H_N$.
\end{proposition}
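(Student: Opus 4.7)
The plan is to transport the entire question across the canonical isometries $\iota_M\colon V_M\to\{0,1\}^M$ and $\iota_N\colon V_N\to\{0,1\}^N$ already introduced in the paper, using the central fact that in a hypercube the graph distance coincides with the Hamming distance on the associated bitstrings. Once this bridge is set up, the proposition becomes an exercise in conjugation: a contraction map is turned into a graph contraction map (and conversely) by pre- and post-composition with the appropriate isometries, and the two distance-non-increase conditions translate into one another via elementary inequalities between path lengths in $H_M$, $\Phi(H_M)$, and $H_N$.

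For the direction $(\Rightarrow)$, I would assume a contraction map $f\colon\{0,1\}^M\to\{0,1\}^N$ and define the vertex map $\phi^V=\iota_N^{-1}\circ f\circ \iota_M\colon V_M\to V_N$. To define $\phi^E$ I would use the contraction property on adjacent bitstrings: any edge $(v,v')\in E_M$ has $d(v,v')=1$, hence $d_H(\iota_M(v),\iota_M(v'))=1$, and the contraction condition forces $d_H(f(\iota_M(v)),f(\iota_M(v')))\leq 1$. Consequently either $\phi^V(v)=\phi^V(v')$ (the edge is contracted) or $(\phi^V(v),\phi^V(v'))\in E_N$, so $\phi^E$ lands in $E_N\cup\{\text{contracted}\}$ as demanded by the weak-homomorphism definition. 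To verify the global graph-contraction inequality, I would pick, for given $v,v'\in V_M$, a geodesic $v=v_0,v_1,\ldots,v_k=v'$ in $H_M$ of length $k=d(v,v')$; its image under $\phi^V$ is a walk in $\Phi(H_M)$ of length at most $k$ (contracted edges shorten it), yielding $d_{\Phi(H_M)}(\phi^V(v),\phi^V(v'))\leq d(v,v')$.

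For the direction $(\Leftarrow)$, I would start from a graph contraction map $\Phi\colon H_M\to H_N$ and set $f=\iota_N\circ \phi^V\circ \iota_M^{-1}$. The Hamming-distance contraction property would then follow from the chain
\begin{equation*}
d_H(f(x),f(x'))=d_{H_N}(\phi^V(v),\phi^V(v'))\leq d_{\Phi(H_M)}(\phi^V(v),\phi^V(v'))\leq d_{H_M}(v,v')=d_H(x,x'),
\end{equation*}
where $v=\iota_M^{-1}(x)$, $v'=\iota_M^{-1}(x')$. The first and last equalities are the isometry property of $\iota_N$ and $\iota_M$, the rightmost inequality is the hypothesis that $\Phi$ is a graph contraction map, and the middle inequality simply reflects that $\Phi(H_M)$ is a subgraph of $H_N$, so every path available in the subgraph is also available in the ambient hypercube.

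The proof is essentially a translation exercise, so I do not anticipate a serious obstacle. The one place where I would slow down and argue carefully is the interplay between three distances that all look alike but are not: the graph distance inside $\Phi(H_M)$ (which appears in the definition of a graph contraction map), the graph distance inside the ambient $H_N$, and the Hamming distance on $\{0,1\}^N$. The inequality $d_{H_N}\leq d_{\Phi(H_M)}$ goes only one way in general; the reverse inequality, which would say that $\Phi(H_M)$ is isometrically embedded in $H_N$, is a genuinely stronger condition (the partial-cube property) and is \emph{not} needed for this proposition. Keeping this asymmetry straight is the only delicate point, and it is precisely the place where the partial-cube hypothesis appears elsewhere in the paper but not here.
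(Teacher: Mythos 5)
Your proof is correct and follows essentially the same route as the paper's: both directions are obtained by conjugating with the canonical path isometries $\iota_M,\iota_N$, with your geodesic-to-walk argument in the forward direction being a slightly more careful rendering of the same distance bookkeeping (the paper instead transports the Hamming inequality directly through the isometries). The only point of divergence is that the paper's notation $\Phi\colon H_M\to H_N$ additionally asserts that the image is a partial cube of isometric dimension $N$ --- a claim the paper states ``by construction'' without real argument and which you explicitly (and reasonably) set aside as not needed for the distance-contraction equivalence itself.
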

\begin{proof}

    $(f\rightarrow \Phi)$ We first show that if $f:\{0,1\}^M\to \{0,1\}^N$ is a contraction map given $M$ for some $N$, there exists a graph contraction map $\Phi:H_M=(V_M,E_M)\to H_N=(V_N,E_N)$.

    Let us define a path isometry $\tilde{\iota}_J: \{0,1\}^J \to V_J$ for $J\in \mbZ_+$ such that
    \begin{equation}
        d_H(x,x') = d(\tilde{\iota}_J(x),\tilde{\iota}_J(x')), \forall x,x'\in\{0,1\}^J.
    \end{equation}
    Then, we construct a graph map $\Phi_f:=(\phi^V_f,\phi^E_f)$ from a contraction map $f$ using the isometries $\tilde{\iota}_M:\{0,1\}^M\to V_M$ and $\tilde{\iota}_N:\{0,1\}^N\to V_N$. That is, we define $\phi^V_f$ and $\phi^E_f$ such that, for $x\in\{0,1\}^M$ and $f(x) \in \{0,1\}^N$,
    \begin{itemize}
        \item $\phi^V_f\circ \tilde{\iota}_M(x)= \tilde{\iota}_N \circ f(x)$ 
        \item $\phi^E_f(\tilde{\iota}_M(x),\tilde{\iota}_M(x')) =(\tilde{\iota}_N\circ f(x), \tilde{\iota}_N \circ f(x'))$ if
        \begin{equation}
            d_H(f(x),f(x'))=1,
        \end{equation}
        however $\phi^E(E)/\phi^E_f(\tilde{\iota}_M(x),\tilde{\iota}_M(x'))$\footnote{We denote $G/e$ as deleting an edge $e$ from a graph $G$. %See appendix \ref{app:graph-operations} for more notations on graph operations. 
        We have this condition to remove self-loops from the image graph $\Phi(H_M)$.} if 
        \begin{equation}
            d_H(f(x),f(x'))=0.
        \end{equation}
    \end{itemize}
    By construction, $\phi^V_f \circ \tilde{\iota}_M(x) \in V_N$ and $\phi^E_f(\tilde{\iota}_M(x),\tilde{\iota}_M(x')) \in E_N$. Thus, $\Phi_f$ is a graph map from $H_M$ to $H_N$. In other words, the image $\Phi_f(H_M)$ is always a partial cube of $dim(\Phi_f(H_M))=N$.
    \begin{figure}
       \centering
       \includegraphics[width=0.3\linewidth]{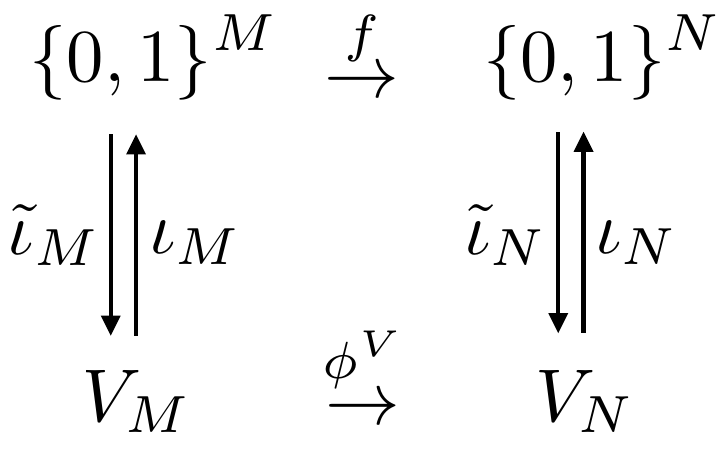}
       \caption{\small{The diagram of the maps in the proof of proposition \ref{pro:equivalence-bit-graph}. The isometries $\iota_{M,N}$ and $\tilde{\iota}_{M,N}$ map between bitstrings $\{0,1\}^{M,N}$ and the set of vertices $V_{M,N}$ of hypercubes $H_{M,N}$. $f$ is a contraction map. $\phi^V$ is a vertex map from $H_M$ to $H_N$. }}
       \label{fig:enter-label}
    \end{figure}
    $\Phi_f$ is a graph contraction map because, for all $ x,x'\in \{0,1\}^M$,
    \begin{equation}
    \begin{split}
        d(\tilde{\iota}_M(x),\tilde{\iota}_M(x')) &= d_H(x,x')\\
        &\geq d_H(f(x),f(x')) \\
        &= d(\phi^V_f\circ \tilde{\iota}_M(x),\phi^V_f\circ\tilde{\iota}_M(x')). \\
    \end{split}
    \end{equation}
    We applied the definition of isometry $\tilde{\iota}_M$ for the first equality. The second inequality is due to the contraction property of $f$. The isometry $\tilde{\iota}_N$ and the definition of $\phi^V_f$ implies the last equality. Therefore, $\Phi_f:H_M\to H_N$ is a graph contraction from $H_M$ to $H_N$.

    $(f \leftarrow \Phi)$ We next show that if a graph map $\Phi$ is a graph contraction map $\Phi:H_M=(V_M,E_M)\to H_N=(\phi^V(V_M),\phi^E(E_M))$ given $M$ for some $N$, there exists a contraction map $f:\{0,1\}^M\to \{0,1\}^N$. 

    Similarly, we define a path isometry $\iota_J:V_J\to \{0,1\}^J$ for $J\in \mbZ_+$ such that
    \begin{equation}
        d_H(\iota_J(v),\iota_J(v')) = d(v,v'), \forall v,v'\in V_J
    \end{equation}
    For $\iota_M:V_M\to \{0,1\}^M$ and $\iota_N:V_N\to \{0,1\}^N$, a contraction map $f_\Phi$ is constructed such that
    \begin{equation}
        f_\Phi \circ \iota_M (v)= \iota_N \circ \phi^V (v).
    \end{equation}
    Then, we see that $f_\Phi:\{0,1\}^M \to \{0,1\}^N$ simply because $\iota_M(v) \in \{0,1\}^M$ and $\iota_N\circ \phi^V (v) \in \{0,1\}^N$. %Note that $\phi^V (v) \in V_N$ by the definition of the graph contraction map $\Phi:H_M \to H_N$.

    For all $ v,v' \in V_M$, 
    \begin{equation}
    \begin{split}
         d_H(\iota_M (v),\iota_M (v')) &=d(v,v') \\
         &\geq d(\phi^V(v),\phi^V(v')) \\
         &= d_H(f_\Phi\circ \iota_M (v), f_\Phi\circ \iota_M (v')).\\
    \end{split}
    \end{equation}
    Therefore, $f_\Phi:\{0,1\}^M \to \{0,1\}^N$ is a contraction map.

\end{proof}

\subsection{An algorithmic solution to the main problems}\label{subsec:solutions-graph}

In this subsection, we provide an algorithmic solution to the main problem \ref{prob:graph}. The key ingredients of our solution are i) graph contraction maps and ii) partial cubes or isometric hypercube embeddability. We use the following three algorithms to realize a graph contraction map from a $M$-dimensional hypercube graph $H_M=(V_M,E_M)$ to a partial cube $G$ of $dim(G)=N$, as follows: 1. partition generator\cite{er1988}, 2. graph contraction\cite{Erciyes2018}, and 3. partial cube identifier\cite{Eppstein_2011}, see algorithm \ref{alg:all-hei} and figure \ref{fig:graph-to-contractionmap}.

The \texttt{Partition generator} algorithm in \cite{er1988} recursively generates all partitions of a given set, a set $V_M$ of vertices of hypercube graph $H_M$ in our case\footnote{Note that taking all partitions of vertices can reproduce all graph contractions. However, there is a scope of algorithmic improvement at this step that considers the relevant and non-redundant partitions only.}. The number of partitions is given by \textit{Bell number} $B_\alpha$ of a set of cardinality $\alpha$, i.e.,
\begin{equation}
    B_\alpha=\sum_{\beta=0}^\alpha P(\alpha,\beta)
\end{equation}
where $P(\alpha,\beta)$ is the \textit{Stirling number of the second kind} computing the number of partitions of the set of cardinality $\alpha$ into $\beta$ non-empty subsets. 

The \texttt{Graph contraction} algorithm \cite{Erciyes2018} has been studied to provide parallel algorithms to solve graph-associated problems, for instance, \cite{Erciyes2018,LOMBARDI2022429,Guattery1992}. We apply the algorithm to a hypercube graph $H_M$ to generate contracted graphs based on the partitions generated above.

However, it turns out that the output graphs generated by the \texttt{Graph contraction} algorithm are not necessarily isometrically hypercube embeddable or partial cubes. So, we apply a polynomial time algorithm \cite{Eppstein_2011} to check whether the graph is a partial cube. We call this algorithm \texttt{Partial cube identifier}.

We now describe algorithm \ref{alg:all-hei} in detail. For a hypercube graph $H_M=(V_M,E_M)$ with a path isometry $\iota_M:V_M\to \{0,1\}^M$, let us call $(H_M,\iota_M)$ as an \textit{initial graph data}.
\begin{enumerate}
    \item  \texttt{Partition generator}\cite{er1988}: Given an initial graph data $(H_M,\iota_M)$, generate $B_{2^M-1}$ partitions, $ p^\sigma_w$ for $\sigma=1,\cdots, B_{2^M-1}$, of $V_M$, i.e., 
    \begin{equation}
         \bigcup_{w=0}^{\beta_\sigma} V_{p^\sigma_w}=V_M   
    \end{equation}
    where $V_{p^\sigma_w}$ are disjoint subsets of vertices for the $\sigma$-th partition $p^\sigma_w$, and $w=0,\cdots, \beta_\sigma$ labels the $\beta_\sigma$ non-empty subsets in the $\sigma$-th partition. We fix $V_{p^\sigma_0} = \{0\}^M= \{0\cdots0\}$ as a single element subset\footnote{In our gauge choice, we choose $\{0\}^M$. One can choose, for instance, $\{1\}^M$ instead.}\footnote{For clarification of notation, $\{0\}^M$ is the binary bitstring of length $M$ of all $0$s. E.g., $\{0\}^3:=\{000\}$.}.

    Example: Given $H_{M=3}=(V_{M=3},E_{M=3})$ and $\iota_{M=3}:V_{M}\to \{0,1\}^M$ such that 
    \begin{equation}\label{eq:example-partitions}
    \begin{split}
        &\iota_{M}(v_0)=000,\;\iota_{M}(v_1)=001,\;\iota_{M}(v_2)=010, \;\iota_{M}(v_3)=100,\\
        &\iota_{M}(v_4)=011,\;\iota_{M}(v_5)=101,\;\iota_{M}(v_6)=110,\;\iota_{M}(v_7)=111.
    \end{split}
    \end{equation}
    Then, we have a list of partitions, some of which are shown below,
    \begin{equation}
    \begin{split}
        & \sigma = 1: V_{p_0^1} =\{v_0\},\;V_{p_1^1} =\{v_1,v_2,v_3,v_4,v_5,v_6,v_7\}\\
        & \sigma=2: V_{p_0^2} =\{v_0\},\;V_{p_1^2} =\{v_1,v_2,v_3,v_4,v_5,v_6\}, \;V_{p_2^2} =\{v_7\} \\
        & \;\;\; \vdots \\
        & \sigma\text{-th} \;\;\; : V_{p_0^\sigma}=\{v_0\}, V_{p_1^\sigma}=\{v_1,v_2,v_3,v_7\}, V_{p_2^\sigma}=\{v_4\},V_{p_3^\sigma}=\{v_5\},V_{p_4^\sigma}=\{v_6\}\\
        & \;\;\; \vdots \\
    \end{split}
    \end{equation}

    \item \texttt{Graph contraction} \cite{Erciyes2018}: Construct a graph $G_\sigma=(V_\sigma,E_\sigma)$ of, for instance, the $\sigma$-th partition following the three steps below.
    \begin{enumerate}
        \item[i)] Choose a partition of $V_M$, e.g., the $\sigma$-th partition, or $\bigcup_{w=0}^{\beta_\sigma} V_{p^\sigma_w} = V_M $.
        \item[ii)] Identify all the vertices in the subset $V_{p^\sigma_w}$ as a vertex labeled with $\chi_{p^\sigma_w}$ for $w = 1,\cdots, \beta_\sigma$, i.e., for every $w = 1,\cdots, \beta_\sigma$,
        \begin{equation}
            (V_M\backslash V_{p^\sigma_w})\cup \{\chi_{p^\sigma_w}\}\text{ and } E_M\backslash E_{p_w^\sigma}
        \end{equation}
        where $E_{p_w^\sigma}: =\{(v,v')\in E_M| v,v' \in V_{p^\sigma_w}\}$.
        
        Thus, 
        \begin{equation}
            V_\sigma =\{\chi_{p^\sigma_w}|(V_M\backslash V_{p^\sigma_w})\cup \{\chi_{p^\sigma_w}\}, \forall w = 1,\cdots, \beta_\sigma\} 
        \end{equation}
        \begin{equation}\label{eq:graph-contraction-multiedge}
        \begin{split}
            \mbE_\sigma = \{\;(\!(\chi_{p^\sigma_w},\chi_{p^\sigma_{w'}})\!)\;| & \chi_{p^\sigma_w},  \chi_{p^\sigma_{w'}}\in V_\sigma\text{ s.t. } \\
            &\;\; (v,v')\in E_M,  v\in V_{p^\sigma_w} ,v'\in V_{p^\sigma_{w'}}; \\
            &\forall w = 1,\cdots, \beta_\sigma\}.\\
        \end{split}
        \end{equation}
        \eqref{eq:graph-contraction-multiedge} is a set of edges connecting the vertices $\chi_{p^\sigma_w},\chi_{p^\sigma_{w'}}\in V_\sigma$ if the vertices $v\in V_{p^\sigma_w}$ and  $v'\in V_{p^\sigma_{w'}}$ were adjacent to each other in $H_M$, i.e., $(v,v')\in E_M$. The double parentheses, such as $(\!(\chi_{p^\sigma_w},\chi_{p^\sigma_{w'}})\!)$, denote that there could be more than a single edge connecting the vertices $\chi_{p^\sigma_w}$ and $\chi_{p^\sigma_{w'}}$ as opposed to a single parenthesis, for instance, $(\chi_{p^\sigma_w},\chi_{p^\sigma_{w'}})$, representing a single edge between the vertices.
        
        \item[iii)]
            If there are more than a single edge $(\!(\chi_{p^\sigma_w},\chi_{p^\sigma_{w'}})\!)$ between the vertices in $V_\sigma$, remove the extra edges so that there is only a single edge $(\chi_{p^\sigma_w},\chi_{p^\sigma_{w'}})$, i.e., 
            \begin{equation}
            \begin{split}
                \mbE_\sigma \to E_\sigma = \{\;(\chi_{p^\sigma_w},\chi_{p^\sigma_{w'}})\;| & \chi_{p^\sigma_w},  \chi_{p^\sigma_{w'}}\in V_\sigma, \forall w = 1,\cdots, \beta_\sigma \}.\\
            \end{split}
            \end{equation}

        %\item Remove edges $(v,v')$ between $v,v'\in V_{p^\sigma_w}$.
        %\item Merge multiple parallel edges between $V_{p^\sigma_w}$ and $V_{p^\sigma_{w'}}$.
    \end{enumerate}

    \item \texttt{Partial cube identifier}\cite{Eppstein_2011}: Check if $G_\sigma$ is a partial cube. 
    \begin{itemize}
        \item If so, the algorithm determines the isometric dimension $idim(G_\sigma)=N$ for some $N\in \mbZ_+$ and an isometry $\iota_N : V_\sigma \to \{0,1\}^N$. Thus, we obtain a \textit{final graph data} $(G_\sigma, \iota_N)$ depending on a choice of partition $\sigma$.
        \item Otherwise, discard the non-partial cubes $G_\sigma$. 
    \end{itemize}
\end{enumerate}

\begin{figure}[t]
    \centering
    \includegraphics[width=0.9\linewidth]{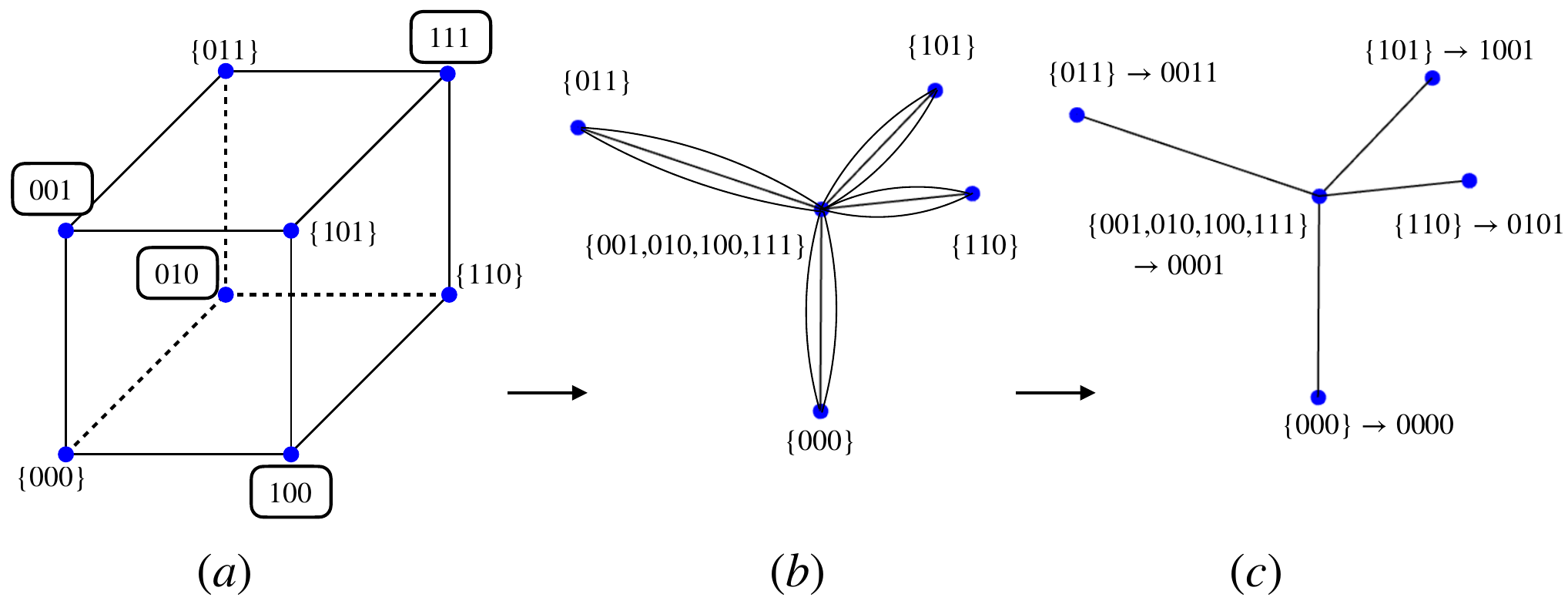}
    \caption{\small{Reading off a contraction map from the graph contraction mapping from $H_{M=3}$ to $H_{N=4}$. The blue dots are vertices. The black lines are the edges connecting the vertices. (a) Each vertex is labeled with $\{0,1\}^3$. The $\sigma$-th partition, for example, $\iota_M(V_{p_0^\sigma})=\{\iota_M(v_0)=000\}, \iota_M(V_{p_1^\sigma})=\{\iota_M(v_1)=001,\iota_M(v_2)=010,\iota_M(v_3)=100,\iota_M(v_7)=111\} ,\iota_M(V_{p_2^\sigma})=\{\iota_M(v_4)=011\},\iota_M(V_{p_3^\sigma})=\{\iota_M(v_5)=101\},\iota_M(V_{p_4^\sigma})=\{\iota_M(v_6)=110\}$ is chosen. The vertices labeled with the bitstrings in $\{001,010,100,111\}$ are enclosed by a rounded square. (b) After identifying the vertices based on the choice of the partition, there are three edges between every pair of the adjacent vertices $\chi_{p^\sigma_w}\in V_\sigma$ in the new graph $(V_\sigma, \mbE_\sigma)$. We obtain the graph $G_\sigma=(V_\sigma, E_\sigma)$ in (c) by removing two edges between every pair of the adjacent vertices. (c) The graph contraction with the choice of partition generates a star graph, a partial cube of isometric dimension $N=4$. Every vertex $\chi_{p^\sigma_w}\in V_\sigma$ gets labeled with a bitstring in a subset of $\{0,1\}^4$, e.g., $\Big\{\iota_{N=4}(\chi_{p^\sigma_0}) = 0000,\iota_4(\chi_{p^\sigma_1}) =0001, \iota_4(\chi_{p^\sigma_2}) =0011,\iota_4(\chi_{p^\sigma_3}) =1001,\iota_4(\chi_{p^\sigma_4}) =0101\Big\}$.}}
    \label{fig:graph-to-contractionmap}
\end{figure}

Once we generate all partial cubes with all possible isometries\footnote{For a partial cube $G=(V,E)$ of isometric dimension $idim(G)=N$ generated by algorithm 1, there are at most $|V|\times N!$ distinct isometries $\iota_N$.}, we can read off a contraction map from the initial graph data $(H_M,\iota_M)$ and the final graph data $(G_\sigma,\iota_N)$. That is, for a choice of partition, a contraction map $f_{(\sigma,\iota_N)}:\{0,1\}^M\to\{0,1\}^N$ is determined\footnote{See, for example, figure \ref{fig:graph-to-contractionmap}.} such that,  for $\forall v \in V_{p^\sigma_w}$ and $\forall w=1,\cdots, \beta_\sigma$, 
\begin{equation}\label{eq:graph-contractionmap}
    f_{(\sigma,\iota_N)}\circ \iota_M(v) = \iota_N(\chi_{p^\sigma_w})
\end{equation}
where $\chi_{p^\sigma_w} \in V_\sigma$.

Recall the example we discussed in eq. \ref{eq:example-partitions} (and in figure \ref{fig:graph-to-contractionmap}). We have,
\begin{equation}
\begin{split}
    f_{(\sigma,\iota_N)}\circ \iota_M(v_0) =  \iota_N(\chi_{p^\sigma_0}) = 0000,&\;f_{(\sigma,\iota_N)}\circ \iota_M(v_4) =  \iota_N(\chi_{p^\sigma_2}) = 0011 \\ 
    f_{(\sigma,\iota_N)}\circ \iota_M(v_5) =  \iota_N(\chi_{p^\sigma_3}) = 1001,&\;f_{(\sigma,\iota_N)}\circ \iota_M(v_6) =  \iota_N(\chi_{p^\sigma_4}) = 0101 \\ 
    f_{(\sigma,\iota_N)}\circ \iota_M(v_1)=f_{(\sigma,\iota_N)}\circ \iota_M(v_2)=f_{(\sigma,\iota_N)}&\circ \iota_M(v_3) = f_{(\sigma,\iota_N)}\circ \iota_M(v_7) =  \iota_N(\chi_{p^\sigma_1}) = 0001.
\end{split}
\end{equation}
See table \ref{tab:mmi-map} for the corresponding contraction map.

\begin{algorithm}
    \caption{A road map to generate all HEIs for a fixed number of LHS terms.}
    \label{alg:all-hei}
    \begin{algorithmic}[1]
        \Procedure{All HEIs for fixed $M$}{}
            \State Construct an initial graph data $(H_M,\iota_M)$.% hypercube $H_M=(V_M,E_M)$.
            \State Generate all partitions $\{\sigma\}$ of hypercube vertices using \texttt{Partition generator}.
            \For{  $\sigma \in \{\sigma\}$ }
            \State Find the contracted graph $G_\sigma=(V_\sigma,E_\sigma)$ using \texttt{Graph contraction}.
            \State Check whether $G_\sigma$ is a partial cube using \texttt{Partial cube identifier}.
            \If{$G_\sigma$ is a partial cube}
            \State Obtain a final graph data $(G_\sigma, \iota_N)$.
            \State Generate HEIs from $(G_\sigma,\iota_N)$ using \texttt{Contraction Map to Inequalities}.
            \EndIf
            \EndFor
        \EndProcedure
    \end{algorithmic}
\end{algorithm}

The procedure generates all possible contraction maps because, first, it generates all possible partial cubes from $H_M$. Moreover, the contraction maps depend on both a choice of partition and isometry. The graph data $(G_\sigma, \iota_N)$ and $(G_{\sigma}, \iota'_N)$ that differ only by the isometries $\iota_N \neq \iota_N'$ can generate possibly two distinct contraction maps. In addition, even if two partial cubes $G_\sigma$ and $G_{\sigma'}$ of the graph data $(G_\sigma, \iota_N)$ and $(G_{\sigma'}, \iota_N)$ for $\sigma\neq \sigma'$ are graph isomorphic and the isometries $\iota_N$ are identical, they could also construct different contraction maps. Hence, the algorithm considers all possibilities of contraction maps. 

However, the algorithmic solution above is overcomplete because it could generate identical contraction maps. We discuss the complexity of the algorithm in section \ref{sec:discussions}.

\section{Holographic entropy inequalities from contraction maps}\label{sec:read-HEI}
\subsection{A greedy algorithm to generate HEIs from a contraction map}
\label{subsec:read-HEI}
In this section, we describe how to construct HEIs from a given contraction map. A contraction map is a map between $H_M$ and $H_N$, and \textit{a priori}, such a map doesn't know about the inequality. As mentioned earlier, the knowledge of the inequality is imparted to the contraction map by virtue of the boundary conditions(\ref{eq:occurence-bitstrings}), where one assigns $(n+1)$ bitstrings to $n$ monochromatic regions (and the all $\{0\}^M$ bitstring is assigned to the purifier $O$).

In principle, for an inequality involving $n$-parties (and a purifier) with $M$ terms on the LHS, the number of combinations in which one can assign the boundary conditions is $C(2^M-1,n)$\footnote{We do not get any new HEI by permutations, only the labels of regions are exchanged.}, where $C(a,b)$ denotes the binomial coefficient. Each such choice gives us a HEI (not necessarily unique). We are interested in \emph{balanced}\footnote{One may also impose the condition of superbalance to narrow down the search for facet inequalities. If we are interested only in true but not necessarily facet inequalities, superbalance is not required.} inequalities that have $M(N)$ non-trivial columns on the LHS(RHS). So we add the \emph{criterion} that after the assignment of occurrence bitstrings, all columns must correspond to some non-trivial subregion entropy. We can then store all the unique HEI candidates.\footnote{One may also be inclined to assign only those bitstrings as single-character boundary subregions whose RHS images have unique LHS pre-images. This further constrains the search space. For example, this condition, taken together with superbalance, uniquely (upto permutation of labels) determines the cyclic inequalities from their graphs.} They are all valid HEIs by the existence of contraction maps. One may further check using known extreme rays if these HEIs are potentially facets of the HEC. However, in our present work, we are not interested in that step. We summarize our algorithm to read off HEIs from the contraction map below in algorithm \ref{alg:read-HEI} and give an example of this exercise in \ref{subsec:mmi-from-map}. The computational complexity of this procedure goes as $\mathcal{O}\left((M+N)\left(2^M\right)^n\right)$ when $n<<2^M$.

\begin{algorithm}
    \caption{A greedy algorithm to generate HEIs from a contraction map.}
    \label{alg:read-HEI}
    \begin{algorithmic}[1]
        \Procedure{Contraction Map to Inequalities}{}
            \State Read the contraction map $f_{(\sigma,\iota_N)}$: $H_M \rightarrow H_N$.
            \State Assign the $\{0\}^M$ bitstring to O.
            \State For $n$-parties, generate $\texttt{num}=C(2^M-1,n)$ combinations of boundary conditions.
            \For{$i = 0;\ i < \texttt{num};\ i++$}
            \State Use boundary conditions to generate HEI $\mathcal{Q}^i_{(\sigma,\iota_N)}$.
            \If{$\mathcal{Q}^i_{(\sigma,\iota_N)}$ is balanced \textbf{and} All columns of $\mathcal{Q}^i_{(\sigma,\iota_N)}$ non-trivial}
            \State Save $\mathcal{Q}^i_{(\sigma,\iota_N)}$.
            \EndIf
            \EndFor
            \State Keep only unique $\{\mathcal{Q}^i_{(\sigma,\iota_N)}
            \}$.
        \EndProcedure
    \end{algorithmic}
\end{algorithm}

\subsubsection{Revisiting Main Problem \ref{prob:phys}}
Now we are in a position provide an answer to the main problem \ref{prob:phys} described earlier. We choose the appropriate hypercube $H_M$ corresponding to the LHS and generate all possible contraction maps $\mF= \{f_{(\sigma,\iota_N)}\}$. For each contraction map $f_{(\sigma,\iota_N)}$, we assign the same boundary conditions for occurrence vectors, faithfully representing the LHS \ref{eq:lhs-mp}, then each $f_{(\sigma,\iota_N)}$  gives a valid inequality $\mathcal{Q}^{0}_{(\sigma,\iota_N)}$ of the form \ref{eq:ineq-mp}, where the fixed superscript-$0$ refers to the fixed boundary conditions.

\subsubsection{A brief discourse on boundary conditions}

 We have discussed how to read off inequalities by choosing the number of parties $n$ and the boundary conditions for a fixed number of terms on the LHS and RHS of inequalities, given a contraction map $f$. This allows us to generate valid $n$-party HEIs. Similarly, for the contraction map $f$, we can choose other sets of boundary conditions for $n'>n$ and obtain $n'$-party HEIs. This suggests that we can get a $n$-party HEI, possibly with non-unit coefficients, by removing $n'-n$ boundary conditions from a $n'$-party inequality.

%then there should exist at least one $n'>n$-inequality that lands on a target $n$-party HEI. 

In general, consider a mapping $(H_{M'},\iota_{M'})\rightarrow(G_{N'},\iota_{N'})$, where $H_{M'}$ is a hypercube $H_{M'}$ canonically labeled by $\iota_{M'}$ corresponding to LHS and $(G_{N'},\iota_{N'})$ are the graph and labeling respectively, corresponding to the RHS of a $n'$-party HEI. Reducing from the $n'$-party inequality to a $n$-party inequality, by eliminating the $n'-n$ boundary conditions results in finding another mapping $(H_{M},\iota_M)\rightarrow(G_{N},\iota_{N})$ such that $H_M \subseteq H_{M'}$ and $G_N\subseteq G_{N'}$\footnote{For graphs $G,G'$, we denote $G\subseteq G'$ when $G$ is a subgraph of $G'$.}. This is always possible because the removal of the subset of boundary conditions corresponds to i) changing the boundary conditions without changing the graph structures, i.e., $H_M=H_{M'}$, $G_N = G_{N'}$, and $\iota_{N}=\iota_{N'}$, or ii) changing the boundary conditions with a contraction of graphs, i.e., $H_M= H_{M'}$ and $G_N \subset G_{N'}$, or $H_M\subset H_{M'}$ and $G_N \subseteq G_{N'}$\footnote{When the equalities do not hold, it corresponds to the operations where one or more columns of the table, for instance, see table \ref{tab:mmi-map}, of bitstrings are removed. Hence, the final graph is also a partial cube.}. 

In contrast, let us consider the reverse problem. That is, given a mapping $(H_{M},\iota_M)\rightarrow(G_{N},\iota_{N})$ of a $n$-party HEI, can we always find a mapping $(H_{M'},\iota_{M'})\rightarrow(G_{N'},\iota_{N'})$ such that $H_M \subseteq H_{M'}$ and $G_N\subseteq G_{N'}$? The answer is affirmative simply because there always exists a hypercube $H_{M'}$ and a partial cube $G_{N'}$ where $H_{M}$ and $G_{N}$ are isometrically embeddable. We summarize the discussion as corollary below.

\begin{corollary}
    All $n$-party HEIs with generically non-unit coefficients on the LHS having corresponding contraction maps are generated from $n'$-party HEIs with unit coefficients and non-repeating terms on the LHS\footnote{The RHS, however, is allowed to have repeating terms.} for some $n'>n$. 
\end{corollary}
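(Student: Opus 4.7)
The plan is to make precise the informal uplifting argument already sketched in footnote~10 and then invoke the discussion immediately preceding the corollary, which has already established that removing boundary conditions carries $n'$-party graph data $(H_{M'},\iota_{M'})\to(G_{N'},\iota_{N'})$ to $n$-party graph data $(H_M,\iota_M)\to(G_N,\iota_N)$ with $H_M\subseteq H_{M'}$ and $G_N\subseteq G_{N'}$. So the remaining content is a construction: given an $n$-party HEI $\sum_{i=1}^l c_i S_{X_i}\geq \sum_{j=1}^r d_j S_{Y_j}$ admitting a contraction map, produce an $n'$-party HEI with unit coefficients and non-repeating LHS that reduces back to it.

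First, I would write the target inequality in its expanded form $\sum_{u=1}^M S_{L_u}\geq \sum_{v=1}^N S_{R_v}$ with $M=\sum_i c_i$, so that some $L_u=L_{u'}$ whenever $c_i>1$. Let $f:\{0,1\}^M\to\{0,1\}^N$ be its contraction map and let $(H_M,\iota_M)\to(G_N,\iota_N)$ be the associated graph data guaranteed by Proposition~\ref{pro:equivalence-bit-graph}. I would then introduce $k$ auxiliary parties $A_{n+1},\ldots,A_{n+k}$ and break the degeneracy between repeated $L_u$'s by tagging each copy with a distinct bit pattern in $\{0,1\}^k$; the simplest choice is $k=\lceil \log_2(\max_i c_i)\rceil$, though $k=M$ always suffices. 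Concretely, this extends every occurrence bitstring $x^u_{A_i}$ by $k$ new coordinates and defines refined boundary vertices $v_{A_{n+j}}$ on $H_M$, yielding an $n'=(n+k)$-party LHS $\sum_{u=1}^M S_{\tilde L_u}$ with all $\tilde L_u$ distinct and unit coefficient.

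Next, I would show the uplifted inequality still possesses a contraction map. The key point is that the map $f$ itself continues to work: the partition of $V_M$ induced by $f$ and the isometric embedding into $H_N$ depend only on the graph structure, not on how boundary vertices are decorated by parties. Reading Theorem~\ref{thm:proofbygraphcontraction} off with the new occurrence vertices produces an $n'$-party HEI of the desired form, possibly with auxiliary-party decorations appearing on the RHS as well (which is allowed since the RHS is only required to have unit-coefficient terms, not non-repeating ones). Finally, trivializing the auxiliary parties $A_{n+j}\to\emptyset$ is precisely case~(i)/(ii) of the preceding discussion applied in reverse: copies of $S_{\tilde L_u}$ tagged from the same $S_{X_i}$ collapse to $c_i$ copies of $S_{X_i}$, and the RHS decorations disappear or merge analogously, returning the original $n$-party HEI with its non-unit coefficients.

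The main obstacle will be the RHS bookkeeping under trivialization: the refined labeling $\iota_{N'}$ on the image graph determines the $n'$-party RHS, and one must verify column by column that collapsing the auxiliary coordinates reproduces the original multiplicities $d_j$. The cleanest approach is to track each of the $N'$ columns of the RHS occurrence table: every column lies either entirely in the auxiliary block (and vanishes on trivialization) or projects onto an existing $S_{Y_j}$ column (and contributes to its multiplicity). A careful accounting—using the fact that $f$ was unchanged and that our tagging of repeated LHS terms uses distinct bit patterns—shows the RHS multiplicities line up on the nose, completing the proof.
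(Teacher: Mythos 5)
Your proposal is correct and takes essentially the same route as the paper: the paper justifies the corollary by the uplift sketched in the surrounding discussion (adding boundary conditions to the \emph{same} contraction map so that $H_M\subseteq H_{M'}$ and $G_N\subseteq G_{N'}$, then trivializing the extra parties to reverse it), and its worked example $2S(ABC)\to S(ABCF)+S(ABCG)$ is precisely your tagging construction. If anything, your column-by-column accounting of the RHS under trivialization is more explicit than the paper's one-line assertion that the required embeddings always exist.
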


We will illustrate this with an example. Consider the following five-party $(A,B,C,D,E)$ facet inequality,
\begin{equation}
 \begin{split}
& 2S(ABC) + S(ABD) + S(ABE) + S(ACD) + S(ADE) + S(BCE) + S(BDE) \geq \\ & S(AB) + S(AC) + S(AD) + S(BC) + S(BE) + S(DE) + S(ABCD) + S(ABCE) + S(ABDE)\label{eq:five-5.3}
\end{split}
\end{equation}
It has a term $2S(ABC)$, which we split, by introduce two more parties, into
$$ 2S(ABC) \rightarrow S(ABCF)+ S(ABCG) $$
Balancing the inequality on both sides, we can generate the following seven-party inequality,
\begin{equation}
\begin{split}
& S(ABCF)+ S(ABCG) + S(ABDF) + S(ABEG) + S(ACD) + S(ADE) + S(BCE) + S(BDE) \geq \\ & S(AB) + S(AC) + S(AD) + S(BC) + S(BE) + S(DE) + S(ABCDF) + S(ABCEG) + S(ABDEFG)\label{eq:seven-5.3}
\end{split}
\end{equation}
which can be easily proved using \cite{Bao:2024contraction_map}. However, this seven-party inequality, need not be a facet inequality. One can trivialize the parties $F,G$ and get back the inequality \ref{eq:five-5.3}.

\subsection{Example: Deriving the MMI from the contraction map of a star graph}
\label{subsec:mmi-from-map}
Consider the star graph shown in figure \ref{fig:graph-to-contractionmap}. We can construct this graph starting from a hypercube $H_3$ and performing the graph contraction with a partition, e.g.,
\begin{equation}
    \{000\},\{011\},\{101\},\{110\},\{001,010,100,111\}.   
\end{equation}

The resultant star graph is isometrically embeddable in a hypercube $H_4$. This embedding can be encoded in the form of a contraction map from $\{0,1\}^3$ to $\{0,1\}^4$ by (\ref{eq:graph-contractionmap}), given in table \ref{tab:mmi-map}. The labels $\{L_u\}$ and $\{R_v\}$ on the contraction map are the LHS and RHS terms of an inequality respectively, to be determined by assigning boundary conditions in $\{s_k\}$. We set our convention to assign $s_1$ to be the purifier $O$.

\begin{table}[h!]
\centering
\begin{tabular}{@{}lllllllll@{}}
\toprule
               & \textbf{$L_1$} & \textbf{$L_2$} & \textbf{$L_3$} & \multirow{9}{*}{\textbf{}} & \textbf{$R_1$} & \textbf{$R_2$} & \textbf{$R_3$} & \textbf{$R_4$} \\ \cmidrule(r){1-4} \cmidrule(l){6-9} 
\textbf{$s_1$} & \textbf{0}     & \textbf{0}     & \textbf{0}     &                            & \textbf{0}     & \textbf{0}     & \textbf{0}     & \textbf{0}     \\
$s_2$ & 0 & 0 & 1 &  & 0 & 0 & 0 & 1 \\
$s_3$ & 0 & 1 & 0 &  & 0 & 0 & 0 & 1 \\
$s_4$ & 0 & 1 & 1 &  & 0 & 0 & 1 & 1 \\
$s_5$ & 1 & 0 & 0 &  & 0 & 0 & 0 & 1 \\
$s_6$ & 1 & 0 & 1 &  & 1 & 0 & 0 & 1 \\
$s_7$ & 1 & 1 & 0 &  & 0 & 1 & 0 & 1 \\
$s_8$ & 1 & 1 & 1 &  & 0 & 0 & 0 & 1 \\ \bottomrule
\end{tabular}
\caption{A contraction map corresponding the star graph shown in figure \ref{fig:graph-to-contractionmap}.}
\label{tab:mmi-map}
\end{table}

We are interested in balanced inequalities that do not contain any trivial $L_u$ (and/or $R_v$) and thus assign the boundary conditions accordingly. For the choice of the boundary condition $(s_4, s_6, s_7)=(A,B,C)$ and its permutations, we get the MMI inequality (Eq. \ref{eq:mmi}),
\begin{equation*}
    %\label{eq:mmi}
    S(AB)+S(AC)+S(BC)\geq S(A)+S(B)+S(C)+S(ABC).
\end{equation*}
In fact, for this contraction map (table \ref{tab:mmi-map}), imposing non-triviality of $\{L_u\}$ and $\{R_v\}$ yields the MMI inequality as the only candidate, as expected from the uniqueness of the MMI contraction map. We also tabulate some other inequalities derived from the MMI map, with non-trivial column labels in table \ref{tab:mmi-ineqs}.\footnote{Note that the inequalities in table \ref{tab:mmi-ineqs} are not facet inequalities.} See Appendix \ref{app:quantum} for further discussions on relaxing the non-triviality of column labels.

\begin{table}[h!]
\centering
\begin{tabular}{@{}|c|c|@{}}
\toprule
\textbf{Parties} &
  \textbf{List of HEIs} \\ \midrule
\begin{tabular}[c]{@{}c@{}}$n=4$\\ $\{A,B,C,D\}$\end{tabular} &
  \begin{tabular}[c]{@{}c@{}}$S_{BD}+ S_{CD}+ S_{ABC} \geq S_{B}+ S_{C}+ S_{D}+ S_{ABCD},$\\ $ S_{BC}+ S_{CD}+ S_{ABD} \geq S_{B}+ S_{C}+ S_{D}+ S_{ABCD},$\\ $S_{AC}+ S_{AD}+ S_{BCD} \geq S_{A}+ S_{C}+ S_{D}+ S_{ABCD}.$\end{tabular} \\ \midrule
\begin{tabular}[c]{@{}c@{}}$n=5$\\ $\{A,B,C,D,E\}$\end{tabular} &
  \begin{tabular}[c]{@{}c@{}}$ S_{DE}+ S_{ACD}+ S_{BCE} \geq S_{C}+ S_{D}+ S_{E}+ S_{ABCDE} ,$\\ $ S_{BE}+ S_{ABD}+ S_{CDE} \geq S_{B}+ S_{D}+ S_{E}+ S_{ABCDE} ,$\\ $ S_{BD}+ S_{ABE}+ S_{CDE} \geq S_{B}+ S_{D}+ S_{E}+ S_{ABCDE} .$\end{tabular} \\ \midrule
\begin{tabular}[c]{@{}c@{}}$n=6$\\ $\{A,B,C,D,E,F\}$\end{tabular} &
  \begin{tabular}[c]{@{}c@{}}$ S_{ACE}+S_{BCF}+S_{DEF}\geq S_{C}+S_{E}+S_{F}+S_{ABCDEF}.$\end{tabular} \\ 
  \bottomrule
\end{tabular}
\caption{The holographic inequalities (up to permutations) for $n=4,5,6$ with $M=3$ and $N=4$, generated from the MMI contraction map (table \ref{tab:mmi-map}) using algorithm \ref{alg:read-HEI}.}
\label{tab:mmi-ineqs}
\end{table}

\section{Discussions}\label{sec:discussions}

\subsection{Complexity}
%'A general method for efficient embeddings of graphs into optimal hypercubes'
We discuss the complexity involved for generating all possible inequalities starting from a hypercube $H_M$.
\begin{itemize}
    \item There exists a polynomial algorithm\cite{er1988} to efficiently generate partitions of a set of cardinality $k$ with computational complexity $\mathcal{O}(k^{1.6})$. Since there are $2^M$ vertices in $H_M$, the computational complexity associated with the step \texttt{Partition generator} is $\mathcal{O}(2^{1.6M})$
    
    \item The number of all possible partitions is upper bounded by the Bell number, $B_{2^M}$. This count is redundant in the sense, multiple contractions can be geometrically \emph{equivalent} up to rotations of the RHS hypercube. So, the computational complexity for the step \texttt{Graph contraction} is $\mathcal{O}\left(2^M\left(B_{2^M}\right)\right)$. The asymptotic approximation for the logarithm of Bell numbers of order smaller than $k$ is given by \cite{Flajolet_Sedgewick_2009},
    \begin{equation}\label{eq:bellnumber}
        \ln B_k = k(\ln k - \ln \ln k -1 + o(1)). 
    \end{equation}
    where $\ln$ is the natural logarithm.

    \item The complexity of finding an embedding (if exists) for a contracted graph goes as $\mathcal{O}(|V|^2)$ where $|V|$ is the number of vertices\cite{Eppstein_2011}. Since the number of vertices is upper bounded by $2^M$, the step \texttt{Partial cube identifier} has a complexity upper bounded by $\mathcal{O}\left(\left(2^M\right)^2\right)$. Since graph contractions always reduce the number of vertices, the actual run-time is faster in most cases.
    
    \item As discussed in section \ref{subsec:read-HEI}, the computational complexity of generating possible inequalities from a given inequality is upper bounded by $\mathcal{O}\left((M+N)\left(2^M\right)^n\right)$. As proved in \ref{thm:facet-terms-lb}, we have $M+1\leq N\leq 2^{M-1}$. For facet inequalities, it is an empirical observation that $N<<2^{M-1}$. In principle, one may choose the number of parties $n$ as large as up to $2^{M-1}$, but one is usually interested in tight inequalities, which has empirically been observed to favor small $n$.
\end{itemize}
Thus, the total computational complexity to generate all possible $n$-party HEIs starting from a LHS consisting $M$ number of terms is upper bounded by $\mathcal{O}\left(B_{2^M}\left(2^M\right)^{n+3} (M+N)\right)$, where $N$ is the number of RHS terms for some inequality. 

%However, the total complexity is overestimated, although
The algorithm constructs all possible contraction maps as discussed in subsection \ref{subsec:solutions-graph}. We believe the complexity can be improved with more efficient algorithms, particularly for identifying geometrically equivalent graph contractions, and further results for bounding $n$ for tight inequalities. We leave this for future work.

\subsection{Classifications of image graphs $\Phi(H_M)$}
It is a tantalizing direction to study these image graphs and classify them into families of HEIs, shedding further light on their qualitative nature. For example, let us look at the graphs of the contraction maps for the family of cyclic inequalities (see figure \ref{fig:cyclic-graphs}). All these graphs have a similarity of symmetric structures, standing on one-leg, due to the fact that the entanglement entropy of all labeled regions appears on the RHS. In the $k=3$ case, we have three 1-dimensional edges spreading out, which gets uplifted to five 2-dimensional petals in $k=5$ case, followed by seven 3-dimensional boxes for $k=7$. Predictably, we have a symmetric arrangement of nine 4-dimensional polytopes standing on one-leg for $k=9$. This strategy appears generalizable; whenever there is a family of image graphs $G_i$ that can be inductively generated and are isometrically hypercube embeddable, it should correspond to a family of HEIs. We leave this exciting exploration of image graphs and their relation to families of polytopes for future work.

\begin{figure}[h!]
    \centering
    \begin{subfigure}{0.3\textwidth}
        \centering
        \includegraphics[width=\linewidth]{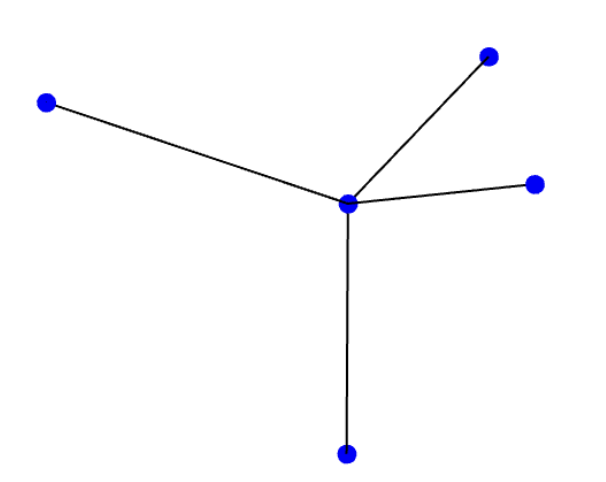}
        \caption{$k=3$}
        \label{fig:k3}
    \end{subfigure}\hfill
    \begin{subfigure}{0.3\textwidth}
        \centering
        \includegraphics[width=\linewidth]{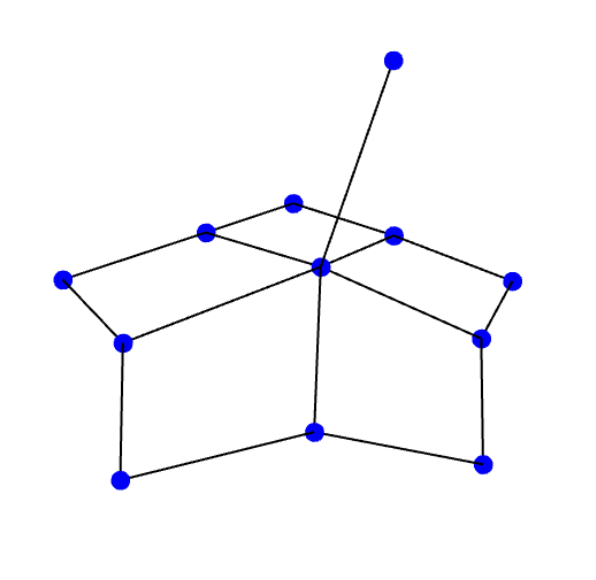}
        \caption{$k=5$}
        \label{fig:k5}
    \end{subfigure}\hfill
    \begin{subfigure}{0.3\textwidth}
        \centering
        \includegraphics[width=\linewidth]{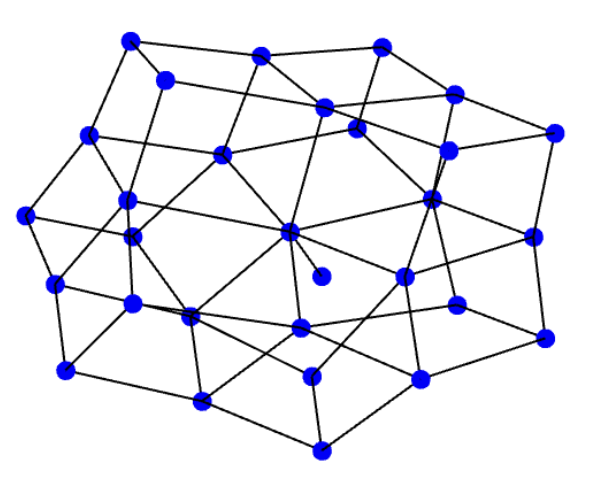}
        \caption{$k=7$}
        \label{fig:k7}
    \end{subfigure}
    \caption{The graphs associated with the first three members of cyclic inequalities.}
    \label{fig:cyclic-graphs}
\end{figure}

\subsection{Reformulating holographic entropy cones?}
The inequalities we generate in this work are more appropriately characterized by the number of unimodular entropies that appear on their LHS's, rather than their party number. While the full HEC based on fixing party is in principle recoverable in this way, it is somewhat cumbersome to do so. As such, it may be worth reformulating the HEC into one based on the number of unimodular entropies that appear, as opposed to party number, which this algorithmic approach more directly generates.

\subsection{Quantum entropy inequalities from contraction maps of HEIs}
Recall from section \ref{subsec:read-HEI}, that we imposed the non-triviality of columns in our algorithm, so as to avoid running into the ambiguity of interpreting a trivial column. But what happens when we invert this condition? It leads us to potential candidates for quantum inequalities that are valid, not just for holographic states, but all quantum states. First, consider the following proposition about holographic inequalities (see proposition \ref{thm:facet-terms-lb} for proof).
\begin{proposition}[Lower bound on $N$]\label{thm:N-lower-bound}\ 
For a facet HEI (except SA) with $M$ terms on the LHS, the number of RHS terms $N$ is bounded below by $M+1\leq N$.
\end{proposition}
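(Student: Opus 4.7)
The plan is to proceed by contradiction. Suppose $\mathcal{Q}: \sum_{u=1}^{M} S_{L_u} \geq \sum_{v=1}^{N} S_{R_v}$ is a facet HEI distinct from SA with corresponding contraction map $f:\{0,1\}^M \to \{0,1\}^N$ (guaranteed by theorem \ref{thm:proofbycontraction}), and assume $N \leq M$. I will argue that $\mathcal{Q}$ admits a non-trivial decomposition $\mathcal{Q} = \mathcal{Q}' + \mathrm{SA}$ into two valid HEIs with positive coefficients, contradicting that $\mathcal{Q}$ is an extreme ray of the dual HEC.

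First, I invoke two structural facts used throughout the excerpt. Facet HEIs other than SA are balanced, which in the bitstring picture means that $f$ preserves the Hamming weight of every boundary bitstring, $|x_{A_i}|_1 = |f(x_{A_i})|_1$. Together with the non-triviality of every column $L_u$ and $R_v$ built into algorithm \ref{alg:read-HEI}, this places stringent constraints on how $f$ can act on $\{0,1\}^M$; in particular, summing balance over all parties forces the total-weight identity $\sum_u |L_u| = \sum_v |R_v|$, while the per-party strengthening plus the pairwise contraction $d_H(x_{A_i}, x_{A_j}) \geq d_H(f(x_{A_i}), f(x_{A_j}))$ yields $a_{ij} \leq b_{ij}$, i.e.\ any two parties appear together at least as often on the RHS as on the LHS.

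Next, I split into sub-cases $N=M$ and $N<M$. In the sub-case $N=M$, either $f$ is a graph automorphism of $H_M$ (a composition of coordinate permutations and bit flips), in which case $\mathcal{Q}$ degenerates to a multiset equality between LHS and RHS (after using purity to identify complementary subregion entropies) and is not a facet; or $f$ identifies two distinct bitstrings $x \neq x'$ to the same image, and the balance condition forces these to have equal Hamming weight. The identified pair then singles out two LHS columns $L_u, L_{u'}$ whose union appears on the RHS. In the sub-case $N<M$, a pigeonhole argument on the Djokovi\'{c}--Winkler equivalence classes of edges of $H_M$ (of which there are exactly $M$, one per coordinate) shows that at least one directional class must collapse entirely under $f$, since the image partial cube only admits $N<M$ such classes; the collapse of class $u^{*}$ (combined with balance and non-triviality) again isolates a pair $L_u, L_{u'}$ amenable to SA peeling. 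In each sub-case I apply $S(L_u) + S(L_{u'}) \geq S(L_u \cup L_{u'})$ and rewrite $\mathcal{Q}$ as $\mathcal{Q}' + \mathrm{SA}$, where $\mathcal{Q}'$ is obtained from $\mathcal{Q}$ by replacing the pair $\{L_u, L_{u'}\}$ with the single term $L_u \cup L_{u'}$. The reduced contraction map $f':\{0,1\}^{M-1} \to \{0,1\}^N$ for $\mathcal{Q}'$ is constructed by merging the two LHS bits of $f$ into one, and $\mathcal{Q}'$ is then a valid HEI by theorem \ref{thm:proofbycontraction}, yielding the desired decomposition and contradiction.

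The main obstacle is the rigorous construction of $f'$ from $f$, particularly in the $N<M$ sub-case, where the collapsed Djokovi\'{c}--Winkler class need not be ``aligned'' with a single coordinate direction of $H_M$, so one must project out the collapsed coordinate carefully while preserving the contraction property on all boundary bitstrings and verifying that $\mathcal{Q}'$ still has non-trivial columns and remains balanced. Closing the argument also depends on the completeness of the contraction map proof method from \cite{Bao:2024contraction_map}, which ensures that once one has a candidate inequality $\mathcal{Q}'$ valid on all holographic states, a corresponding contraction map is available to witness it, so that the SA peeling step closes cleanly at each recursion.
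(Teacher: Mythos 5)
Your route is genuinely different from the paper's: the paper proves Proposition \ref{thm:N-lower-bound} (as Proposition \ref{thm:facet-terms-lb}) by a short counting argument conditional on Conjecture \ref{conj:sergio} --- every facet HEI except SA is a sum of at least one $-I_3$ term (three positive, four negative entropies) plus conditional $-I_3$ terms (equal positive and negative), so $N = M + p \geq M+1$. You instead attempt an unconditional argument by contradiction, peeling off an SA instance to contradict facet-ness. That would be a nice strengthening if it worked, but as written it has genuine gaps beyond the one you flag.

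First, the pigeonhole claim in the $N<M$ sub-case is false: a Djokovi\'{c}--Winkler class of $H_M$ need not collapse entirely just because the image has fewer classes, since several domain classes can map \emph{onto} the same image class. The parity map $f:\{0,1\}^2\to\{0,1\}^1$ with $f(00)=f(11)=0$, $f(01)=f(10)=1$ is a valid contraction with $N=1<M=2$ in which neither coordinate class collapses --- every edge of $H_2$ maps to an edge of $H_1$. Second, even granting that some pair of LHS columns $L_u, L_{u'}$ is singled out, the step you call the ``main obstacle'' is not a technicality but the entire content of the proof: the naive reduced map $f'(z) = f(\iota(z))$, where $\iota$ duplicates the merged bit into positions $u$ and $u'$, is \emph{not} automatically a contraction, because $d_H(\iota(z),\iota(z')) = d_H(z,z') + 1$ whenever the merged bit differs, so the contraction bound you inherit from $f$ is off by one exactly where you need it. Third, the validity of $\mathcal{Q}'$ cannot be outsourced to the completeness result of \cite{Bao:2024contraction_map} --- that argument presupposes $\mathcal{Q}'$ is already known to hold on all holographic states, which is what you are trying to establish. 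Without an explicit construction of $f'$ (or an independent proof that $\mathcal{Q}'$ is valid), the decomposition $\mathcal{Q} = \mathcal{Q}' + \mathrm{SA}$ is not exhibited and the contradiction does not close. Note also that the $N=M$ automorphism sub-case silently uses balance of non-boundary bitstrings, which does not follow from balance of the inequality.
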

One direct implication is that all such $n$-party facet inequalities (except SA) are violated by the $(n+1)$-party GHZ states (see corollary \ref{cor:GHZ} for proof). Since, a quantum inequality must always be satisfied by the GHZ state, these holographic inequalities cannot be valid candidates for quantum inequalities. Thus, we derive a necessary constraint for quantum inequalities.
\begin{corollary}\label{cor:qineq-term}
    For a quantum inequality with $M$ terms on the LHS, the number of RHS terms $N$ is bounded above by $M\geq N$.
\end{corollary}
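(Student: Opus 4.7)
The plan is to adapt the GHZ-state witness argument that powered Corollary \ref{cor:GHZ} to arbitrary quantum (not just holographic facet) inequalities, and run it in the contrapositive. Concretely, I will show that any entropy inequality of the form $\sum_{u=1}^M S_{L_u} \geq \sum_{v=1}^N S_{R_v}$ with $N > M$ is violated by a suitable GHZ state, and hence cannot be a valid quantum inequality.

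First, I would recall the key fact about the $(n+1)$-party GHZ state $|\mathrm{GHZ}\rangle = \frac{1}{\sqrt{2}}\bigl(|0\rangle^{\otimes(n+1)} + |1\rangle^{\otimes(n+1)}\bigr)$: the reduced density matrix on any proper non-empty subsystem $X$ is $\frac{1}{2}\bigl(|0\rangle\langle 0|^{\otimes |X|} + |1\rangle\langle 1|^{\otimes |X|}\bigr)$, which has von Neumann entropy $\log 2$, while the empty and full subsystems have vanishing entropy. This computation is identical to the one underlying the proof of Corollary \ref{cor:GHZ}, and it does not use holography anywhere, so it applies just as well to any candidate quantum inequality.

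Next, I would evaluate the candidate inequality on this state. Since every $L_u$ and every $R_v$ is, by construction of a balanced inequality, a proper non-empty subsystem of the $(n+1)$-party union of the $n$ regions and the purifier $O$, each term evaluates to $\log 2$. The inequality therefore reduces to $M\log 2 \geq N\log 2$, i.e., $M \geq N$. Consequently, if one assumes for contradiction that a valid quantum inequality has $N > M$, this GHZ state violates it, contradicting the defining property that a quantum inequality holds on every quantum state. This forces $N \leq M$, which is the claim.

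There is no serious obstacle here; the proof is essentially immediate once the GHZ entropies are in hand. The only point that warrants care is the implicit assumption that no term on either side is trivial (empty or full system) — this is precisely the non-triviality of columns imposed in the algorithm of Section \ref{subsec:read-HEI}, and it is what guarantees that each term contributes a full $\log 2$ rather than zero on the GHZ state. Under this standard assumption the corollary follows directly.
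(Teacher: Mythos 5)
Your proof is correct and follows the same route the paper takes: the text immediately preceding Corollary \ref{cor:qineq-term} derives it as the contrapositive of the GHZ-violation argument in Corollary \ref{cor:GHZ}, namely that every proper non-empty subsystem of the $(n+1)$-party GHZ state has the same positive entropy, so the inequality collapses to $M\geq N$. Your write-up is in fact slightly more explicit than the paper's (which leaves the corollary without a separate proof environment), but the underlying idea is identical.
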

At the expense of not interpreting the trivial columns, one can impose this new constraint from corollary \ref{cor:qineq-term} as a means to generate HEIs with $M\geq N$. These HEIs serve as potential candidates for quantum inequalities. For example, one can recover the subadditivity(\ref{eq:sa}), strong subadditivity(\ref{eq:ssa}), Araki-Lieb(\ref{eq:AL}) and weak monotonicity(\ref{eq:wm}) inequalities from the MMI contraction map, all of which are quantum inequalities. This recovery may be explained by the fact that the graph corresponding to these inequalities can be obtained by graph contractions in the MMI graph. We leave a detailed discussion about generating valid quantum inequalities from contraction maps for future work.

\subsection{What could we learn from algorithmic ``flatness''?}

Consider a set $\mF_{(M,N)}= \{f| f: \{0,1\}^M\to \{0,1\}^N\}$ of contraction maps for fixed $M$, generated by our algorithm. Given a set of disjoint boundary subregions $[n+1]$ including a purifier $O$, we can construct a constant time slice of bulk manifold $\mM_L^{b.c.}$ with at most $M$ distinct RT surfaces by giving a boundary condition($b.c.$) to a contraction map $f\in \mF_{(M,N)}$. Note that $M_L^{b.c.}$ does not depend on the choice of contraction map $f\in \mF_{(M,N)}$ since $M_L^{b.c.}$ corresponds to $2^M$ bitstrings with a boundary condition.

Let us choose a boundary condition for all contraction maps $f \in \mF_{(M,N)}$. This fixes $\mM_L^{b.c.}$. For each contraction map $f\in \mF_{(M,N)}$, we can find a bulk manifold $\mM^{b.c.}_R(f)$, with at most $N$ distinct RT surfaces. Each $\mM^{b.c.}_R(f)$ corresponds to a subset of $2^N$ bitstrings with the boundary condition. Then, the total area of RT surfaces in $\mM^{b.c.}_L$ upper bounds that of RT surfaces in $\mM^{b.c.}_R(f)$ for any $f \in \mF_{(M,N)}$. It thus implies that the number $\eta(\mM_L)$ of different ways to deform the RT surfaces in $\mM_L$ into those in $\mM_R(f)$ is at most the number of all the contraction maps, $|\mF_{(M,N)}|$.

Consider two bulk manifolds $\mM^{b.c.1}_L$ and $\mM^{b.c.2}_L$ with boundary condition $1$(b.c.1) and boundary condition $2$(b.c.2), i.e., they have distinct configurations of RT surfaces. It could be said that the bulk manifold $\mM^{b.c.1}_L$ is more ``flat''\footnote{In machine learning, \textit{flatness} of a loss surface characterizes the change in loss under the perturbations of parameters\cite{Shalev-Shwartz_Ben-David_2014}.} than $\mM^{b.c.2}_L$ if there are more ways to deform the RT surfaces in $\mM^{b.c.1}_L$ than those in $\mM^{b.c.2}_L$, i.e., $\eta(\mM^{b.c.1}_L) \geq \eta(\mM^{b.c.2}_L)$.

We leave the investigations on what the ``flatness'' could imply about properties of HEIs and bulk geometries, or vice versa.

\subsection{Further speeding up convex optimization}
It has previously been shown in \cite{Bao:2015bfa} that all HEIs are also obeyed by cut functions of general graphs. Therefore, the new classes of inequalities we explore in this work will also be respected by cut functions on graphs. We note that submodular convex optimization is known to be significantly faster than standard convex optimization methods, via the imposition of only the SSA inequality. It is worth investigating whether incorporating these more restrictive inequalities would provide further speed-up to convex optimization approaches to graph theoretic problems.

\section*{Acknowledgement}
We would like to thank Scott Aaronson, Bartek Czech, Sergio Hernandez-Cuenca, Cynthia Keeler, and Michael Walter for helpful comments. We thank Bart\l{}omiej Czech for comments on the draft. N.B. is supported by DOE ASCR, in particular under the grant Novel Quantum Algorithms from Fast Classical Transforms. K.F. is supported by N.B.'s startup funding at Northeastern University. J.N. is partially supported by the NSF under Cooperative Agreement PHY2019786 and N.B.'s startup funding at Northeastern University.

\appendix

\section{From HEIs to quantum entropy inequalities}\label{app:quantum}
In this section, we will discuss quantum entropy inequalities. First we will prove some of our claims from the text.

\begin{definition}[Tripartite form \cite{Hernandez-Cuenca:2023iqh}]\label{def:tripartite}
    An information quantity $\mathcal{Q}$ is said to be in the tripartite form if it is expressed as
    \begin{equation}\label{eq:tripartite}
        \mathcal{Q}=\sum_i -I_3(X_i:Y_i:Z_i | W_i)
    \end{equation}
    where the arguments $X_i, Y_i, Z_i, Wi \subset [N]$ are disjoint subsystems, the sum runs over any finite number of terms, and we allow for the conditioning to trivialize, $W_i = \emptyset$, in which case $I_3(X_i:Y_i:Z_i | \emptyset)= I_3(X_i:Y_i:Z_i)$ and, they are defined to be
    \begin{equation}
        I_3(X_i:Y_i:Z_i | W_i)=I_3(X_i:Y_i:Z_i W_i)-I_3(X_i:Y_i:W_i)
    \end{equation}
    and,
    \begin{equation}
        I_3(X_i:Y_i:Z_i)= X_i + Y_i + Z_i - X_i Y_i - X_i Z_i - Y_i Z_i + X_i Y_i Z_i
    \end{equation}
    We denote $I^pC^q$ for a $\mathcal{Q}$ that has $p$ number of $-I_3(X_i: Y_i: Z_i)$ and $q$ number of $-I_3(X_i:Y_i:Z_i | W_i)$ terms in the sum (\ref{eq:tripartite}).
\end{definition}
We borrow the following conjecture \ref{conj:sergio} from \cite{Hernandez-Cuenca:2023iqh}.
\begin{conjecture}\label{conj:sergio}
     All facet inequalities (except SA) are expressible in the  $I^pC^q$ form with $p\geq 1$ and $q\geq 0$.
\end{conjecture}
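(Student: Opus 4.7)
The plan is to decompose the conjecture into three layered subclaims: first, every facet HEI except SA is superbalanced; second, every superbalanced information quantity admits an $I^pC^q$-form expansion (in the sense of Definition~\ref{def:tripartite}) with nonnegative integer coefficients; third, for facets the unconditional count satisfies $p\geq 1$. To establish superbalance I would combine the contraction-map characterization with Proposition~\ref{thm:N-lower-bound} ($N\geq M+1$): any failure of superbalance at some party $A_i$ would allow a coordinate projection of the image partial cube $\Phi(H_M)$ that preserves all occurrence bitstrings while strictly reducing $N$, contradicting tightness of the facet. This would promote the empirically observed balance property of facets to full superbalance.

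Given superbalance, a standard dimension count in the $I$-basis of \cite{He:2019repackaged} shows that the unconditional and conditional tripartite informations span the superbalanced subspace of information quantities on $n\geq 3$ parties, so the decomposition
\begin{equation*}
  \mathcal{Q} \;=\; \sum_i a_i\bigl(-I_3(X_i{:}Y_i{:}Z_i)\bigr) \;+\; \sum_j b_j\bigl(-I_3(X_j{:}Y_j{:}Z_j\,|\,W_j)\bigr)
\end{equation*}
exists a priori with signed rational coefficients. A Farkas-type argument, exploiting that each term on the right is holographically nonnegative and that $\mathcal{Q}$ lies on a supporting hyperplane of the HEC, should then force the $a_i,b_j$ to be nonnegative (and after rescaling, integer).

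To force $p\geq 1$ I would argue by contradiction: suppose $\mathcal{Q}$ is a pure $C$-form combination, with every summand depending non-trivially on its conditioning set $W_j$. Choosing a boundary assignment that effectively trivializes the union $\bigcup_j W_j$ in the underlying contraction map would reduce $\mathcal{Q}$ to a strictly lower-party inequality built only from mutual informations $I(\cdot:\cdot|\cdot)$; but the only purely-mutual-information facet of the HEC is SA, excluded by hypothesis, so the reduced inequality is non-facet, contradicting that $\mathcal{Q}$ itself is a facet.

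The main obstacle is the last step: uniformly trivializing multiple intersecting $W_j$'s across summands requires careful combinatorial bookkeeping, and the argument may require structural input specific to the image-graph topology produced by Algorithm~\ref{alg:all-hei}---in particular, one needs to rule out the possibility that the $W_j$'s conspire to leave a genuine $n$-party residue after all conditionings are switched off. Promoting balance to superbalance (step one) is also currently only empirical and is likely the second-hardest substep, possibly requiring a case analysis on the partial-cube structure of $\Phi(H_M)$.
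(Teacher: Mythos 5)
The statement you are trying to prove is not proved in the paper at all: it is explicitly imported as an open conjecture from \cite{Hernandez-Cuenca:2023iqh} (``We borrow the following conjecture\ldots''), and the paper only uses it as a hypothesis to derive Proposition \ref{thm:facet-terms-lb}. So there is no paper proof to compare against, and any complete argument you supply would be new mathematics. Against that standard, your proposal has two serious problems. First, it is circular: you invoke Proposition \ref{thm:N-lower-bound} (equivalently \ref{thm:facet-terms-lb}, $N\geq M+1$) to establish superbalance, but in this paper that proposition is itself deduced \emph{from} Conjecture \ref{conj:sergio}. You cannot use a downstream consequence of the conjecture as an ingredient in its proof. (Superbalance of facets other than SA is in fact a theorem of \cite{He:2020xuo}, so this step can be salvaged by citation rather than by your projection argument, but not by the route you describe.)

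Second, and more fundamentally, the ``Farkas-type argument'' is where the entire content of the conjecture lives, and as stated it does not work. A dimension count in the $I$-basis does give you a \emph{signed} rational decomposition of any superbalanced $\mathcal{Q}$ into (conditional) tripartite informations. But Farkas/LP duality only certifies nonnegative coefficients if $\mathcal{Q}$ lies in the cone generated by the terms $-I_3(X{:}Y{:}Z|W)$; the facts that each such term is holographically nonnegative and that $\mathcal{Q}$ is a supporting hyperplane of the HEC do not imply membership in that (generally smaller) cone. Whether every facet of the HEC lies in the tripartite-form cone is precisely the open question, so ``should then force the $a_i,b_j$ to be nonnegative'' assumes the conclusion. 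Your $p\geq 1$ step inherits the same difficulty and, as you acknowledge, the trivialization of intersecting $W_j$'s is not controlled. The honest status of your write-up is a research program with the hardest step unaddressed, not a proof; if you want to use the $I^pC^q$ structure in this paper, the correct move is to state it as a conjectural hypothesis, exactly as the authors do.
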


\begin{proposition}\label{thm:facet-terms-lb}
    For facet inequalities (except SA), $M+1 \leq N$
\end{proposition}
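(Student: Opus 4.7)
The plan is to reduce the claim $N \geq M+1$ to a single linear invariant of the information quantity $\mathcal{Q}$ and then appeal to the tripartite decomposition of conjecture \ref{conj:sergio}. Writing $\mathcal{Q} = \sum_i a_i S_{\tilde{X}_i}$ as in (\ref{eq:hei-q}), I would first introduce the sum-of-coefficients functional $\Sigma(\mathcal{Q}) := \sum_i a_i$. By (\ref{eq:lMrN}), the positive coefficients $c_i$ sum to $M$ and the negative coefficients $-d_j$ sum to $-N$, so $\Sigma(\mathcal{Q}) = M - N$ identically. The crucial feature is that $\Sigma$ depends only on the final integer coefficients in the canonical entropy basis of $\mathcal{Q}$; it is insensitive to how $\mathcal{Q}$ is assembled from building blocks, and in particular is unaffected by cancellations between different tripartite contributions.

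The second step is to evaluate $\Sigma$ on the two elementary pieces of definition \ref{def:tripartite}. From the explicit expansion $I_3(X\!:\!Y\!:\!Z) = S(X)+S(Y)+S(Z)-S(XY)-S(XZ)-S(YZ)+S(XYZ)$ one reads off $\Sigma(-I_3(X\!:\!Y\!:\!Z)) = -1$. For the conditional version, the defining recursion $I_3(X\!:\!Y\!:\!Z\mid W) = I_3(X\!:\!Y\!:\!ZW) - I_3(X\!:\!Y\!:\!W)$ together with linearity gives $\Sigma(-I_3(X\!:\!Y\!:\!Z\mid W)) = -1 + 1 = 0$. Conjecture \ref{conj:sergio} then supplies, for any facet HEI $\mathcal{Q}$ other than SA, a decomposition of type $I^{p}C^{q}$ with $p \geq 1$ and $q \geq 0$. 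Linearity of $\Sigma$ yields $\Sigma(\mathcal{Q}) = -p \cdot 1 + q \cdot 0 = -p \leq -1$, which combined with $\Sigma(\mathcal{Q}) = M - N$ gives $M - N \leq -1$, i.e., $N \geq M+1$.

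The main obstacle is that the whole argument rests on conjecture \ref{conj:sergio}, which is itself open: if some facet HEI admitted only decompositions with $p = 0$, purely in terms of conditional tripartite informations, then $\Sigma$ would vanish and the bound would collapse to the trivial $M = N$. A secondary point to confirm cleanly at the outset is that the multiplicity counts $M,N$ arising from the expanded form (\ref{eq:genentineq-expand}) are indeed compatible with the sign-split decomposition of $\mathcal{Q}$'s coefficients in (\ref{eq:hei-q}); this is an immediate consequence of (\ref{eq:lMrN}). Any attempt to make proposition \ref{thm:facet-terms-lb} unconditional would therefore first need to establish at least the $p \geq 1$ half of conjecture \ref{conj:sergio}.
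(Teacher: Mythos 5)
Your proposal is correct and follows essentially the same route as the paper's own proof: both invoke Conjecture \ref{conj:sergio} and observe that each $-I_3(X\!:\!Y\!:\!Z)$ contributes one net excess RHS term while each conditional piece contributes zero, so $N-M = p \geq 1$. Your packaging of this count as the linear functional $\Sigma(\mathcal{Q}) = M-N$ is a slightly cleaner way to state the paper's term-counting argument (and makes the robustness against cancellations between building blocks explicit), and your caveat that the result is conditional on the conjecture matches the paper's own reliance on it.
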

\begin{proof}
    According to conjecture \ref{conj:sergio}, we can write an inequality $\mathcal{Q}$ of the form $I^pC^q$ as
    \begin{equation}
        \mathcal{Q}=\sum_{i=1}^p -I_3(X_i:Y_i:Z_i) + \sum_{j=1}^q -I_3(X_i:Y_i:Z_i | W_i) := \sum_{i=1}^p I^{(i)} + \sum_{j=1}^q C^{(j)}\geq 0,
    \end{equation}
with non-trivial $W_i$.
It is simple to show that every $C^{(j)}$ has an equal number of positive and negative terms, whereas every $I^{(i)}$ has one more negative term than positive terms, where $i$ and $j$ simply labels the associated $I$ and $C$ terms respectively. Since, positive (negative) terms contribute to LHS (RHS) and $p\geq 1$, we have $M+1 \leq N$.
\end{proof}

\begin{corollary}\label{cor:GHZ}
    All $n$-party HEIs that are facets of the HEC (except SA) are violated by the $(n+1)$-party GHZ state.
\end{corollary}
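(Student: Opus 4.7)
The plan is to exploit the sharp term-count mismatch established in Proposition \ref{thm:facet-terms-lb} against the rigid entropy profile of the GHZ state. First, I would take the $(n+1)$-party GHZ state
\[
|\mathrm{GHZ}\rangle \;=\; \tfrac{1}{\sqrt{2}}\bigl(|0\rangle^{\otimes(n+1)} + |1\rangle^{\otimes(n+1)}\bigr),
\]
treating $A_1,\dots,A_n,O$ as the $n+1$ factors, with $O$ playing the role of the purifier of the HEI. A routine partial-trace computation shows that for any non-empty proper subset $X\subsetneq\{A_1,\dots,A_n,O\}$, the reduced density matrix is maximally mixed on a two-dimensional subspace, so $S(X)=\ln 2$ \emph{uniformly} in $X$ (only the empty set and the full $(n+1)$-party system give zero).

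Next, I would observe that in the HEI $\sum_{u=1}^{M} S_{L_u} \geq \sum_{v=1}^{N} S_{R_v}$ each $L_u$ and $R_v$ is a non-empty subset of $\{A_1,\dots,A_n\}$ and therefore a non-empty proper subset of the full $(n+1)$-party system (since $O$ is excluded by construction). Consequently every single entropy term on both sides evaluates to $\ln 2$ when the state is the $(n+1)$-party GHZ, and the inequality collapses to the scalar statement $M \geq N$.

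Finally, I would invoke Proposition \ref{thm:facet-terms-lb}, which guarantees $M+1 \leq N$ for every facet HEI other than SA, i.e.\ $M < N$. This directly contradicts $M \geq N$, so the GHZ state violates every such facet HEI, completing the argument.

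The only step that requires any care is checking that no $L_u$ or $R_v$ accidentally equals the empty set or the full $(n+1)$-party system (which would have zero entropy and spoil the clean reduction to $M \geq N$). This is automatic here: the non-triviality of columns imposed in Algorithm \ref{alg:read-HEI}, together with the exclusion of $O$ from every $L_u$ and $R_v$, guarantees that each subregion appearing in the HEI is a non-empty proper subset of the $(n+1)$-party system. Everything else is immediate from Proposition \ref{thm:facet-terms-lb} and the known entropy profile of the GHZ state, so no serious obstacle is anticipated.
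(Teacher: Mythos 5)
Your proof is correct and follows essentially the same route as the paper's: evaluate every subregion entropy of the $(n+1)$-party GHZ state to the same positive constant, so the inequality reduces to $M\geq N$, which contradicts the bound $N\geq M+1$ from Proposition \ref{thm:facet-terms-lb}. The paper's version is a two-line sketch; yours just spells out the uniform entropy profile and the non-triviality of the subregions explicitly.
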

\begin{proof}
    For a GHZ state, $S_{X}=S>0$ for all $X\in P(n)\backslash\emptyset$.  Since, $N\geq M+1$ for all facet inequalities (except SA), they are trivially violated.
\end{proof}

Recall the contraction map of the MMI inequality (given in table \ref{tab:mmi-mapv2} below). In this section, we use this map as an example and allow the columns to carry trivial labels such that they satisfy the necessary conditions\footnote{We leave the understanding of sufficient conditions for valid quantum entropy inequalities for future work.} for quantum entropy inequalities, namely, the existence of a contraction map and constraints on the relative number of terms appearing on two sides,
\begin{table}[h!]
\centering
\begin{tabular}{@{}lllllllll@{}}
\toprule
               & \textbf{$L_1$} & \textbf{$L_2$} & \textbf{$L_3$} & \multirow{9}{*}{\textbf{}} & \textbf{$R_1$} & \textbf{$R_2$} & \textbf{$R_3$} & \textbf{$R_4$} \\ \cmidrule(r){1-4} \cmidrule(l){6-9} 
\textbf{$s_1$} & \textbf{0}     & \textbf{0}     & \textbf{0}     &                            & \textbf{0}     & \textbf{0}     & \textbf{0}     & \textbf{0}     \\
$s_2$ & 0 & 0 & 1 &  & 0 & 0 & 0 & 1 \\
$s_3$ & 0 & 1 & 0 &  & 0 & 0 & 0 & 1 \\
$s_4$ & 0 & 1 & 1 &  & 0 & 0 & 1 & 1 \\
$s_5$ & 1 & 0 & 0 &  & 0 & 0 & 0 & 1 \\
$s_6$ & 1 & 0 & 1 &  & 1 & 0 & 0 & 1 \\
$s_7$ & 1 & 1 & 0 &  & 0 & 1 & 0 & 1 \\
$s_8$ & 1 & 1 & 1 &  & 0 & 0 & 0 & 1 \\ \bottomrule
\end{tabular}
\caption{A contraction map corresponding the star graph shown in figure \ref{fig:mmi}.}
\label{tab:mmi-mapv2}
\end{table}

In particular, we will give an example showing how one can arrive at strong subadditivity starting from the contraction map (table \ref{tab:mmi-mapv2}) corresponding to the MMI. We begin with the MMI, where one may have the the boundary conditions assigned as $(s_1, s_4, s_6, s_7)=(O,A,B,C)$. For the SSA, the boundary conditions may be changed to $(s_1, s_4, s_6, s_5)=(O,A,B,C)$ (see figure \ref{fig:mmi-to-ssa} for a graphical interpretation).
The resultant inequality is
\begin{equation}
    S(AB)+S(BC)\geq S(ABC).
\end{equation}

\begin{figure}[h!]
    \centering
    \begin{subfigure}{0.49\textwidth}
        \centering
        \includegraphics[width=\linewidth]{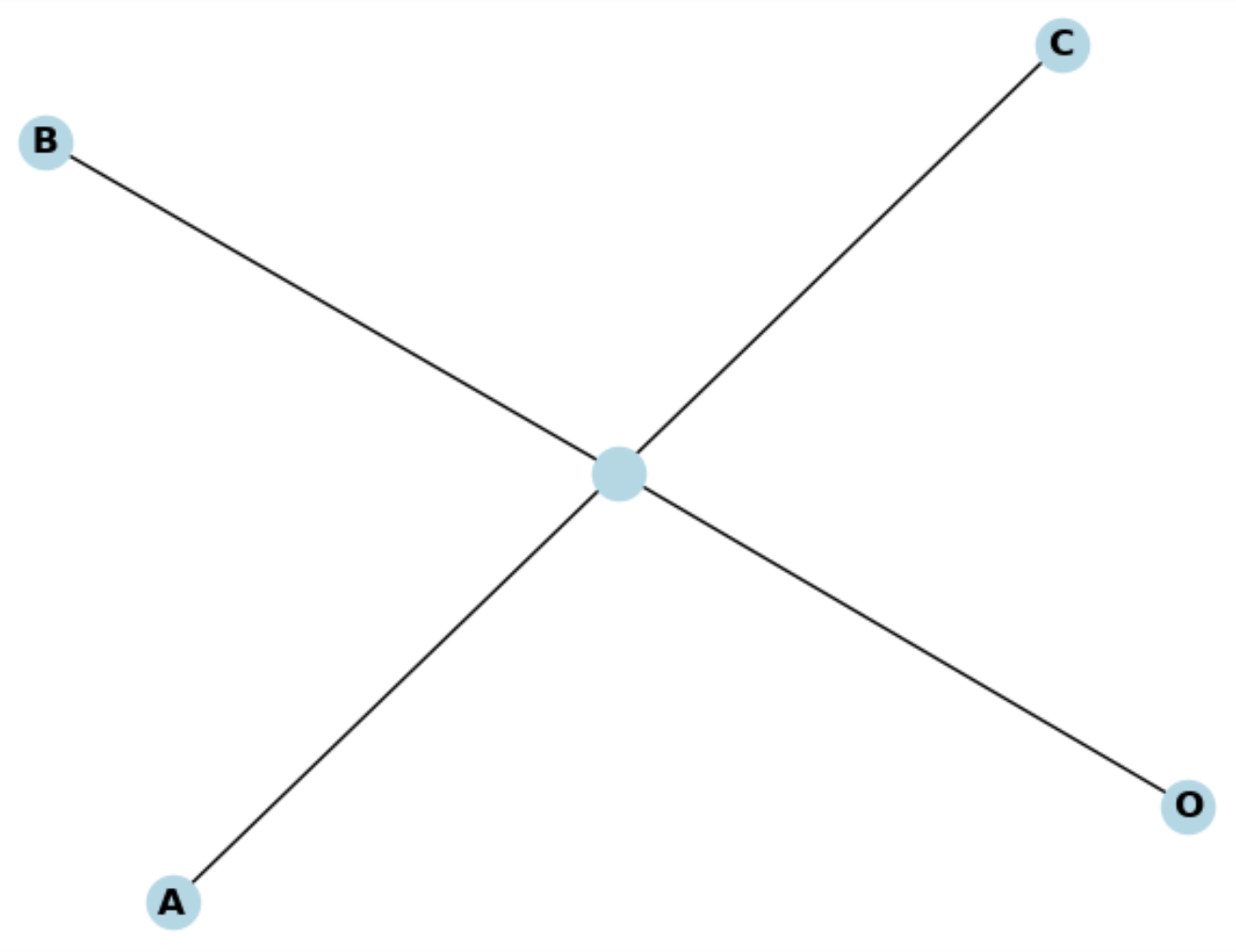}
        \caption{MMI}
        \label{fig:mmi}
    \end{subfigure}\hfill
    \begin{subfigure}{0.49\textwidth}
        \centering
        \includegraphics[width=\linewidth]{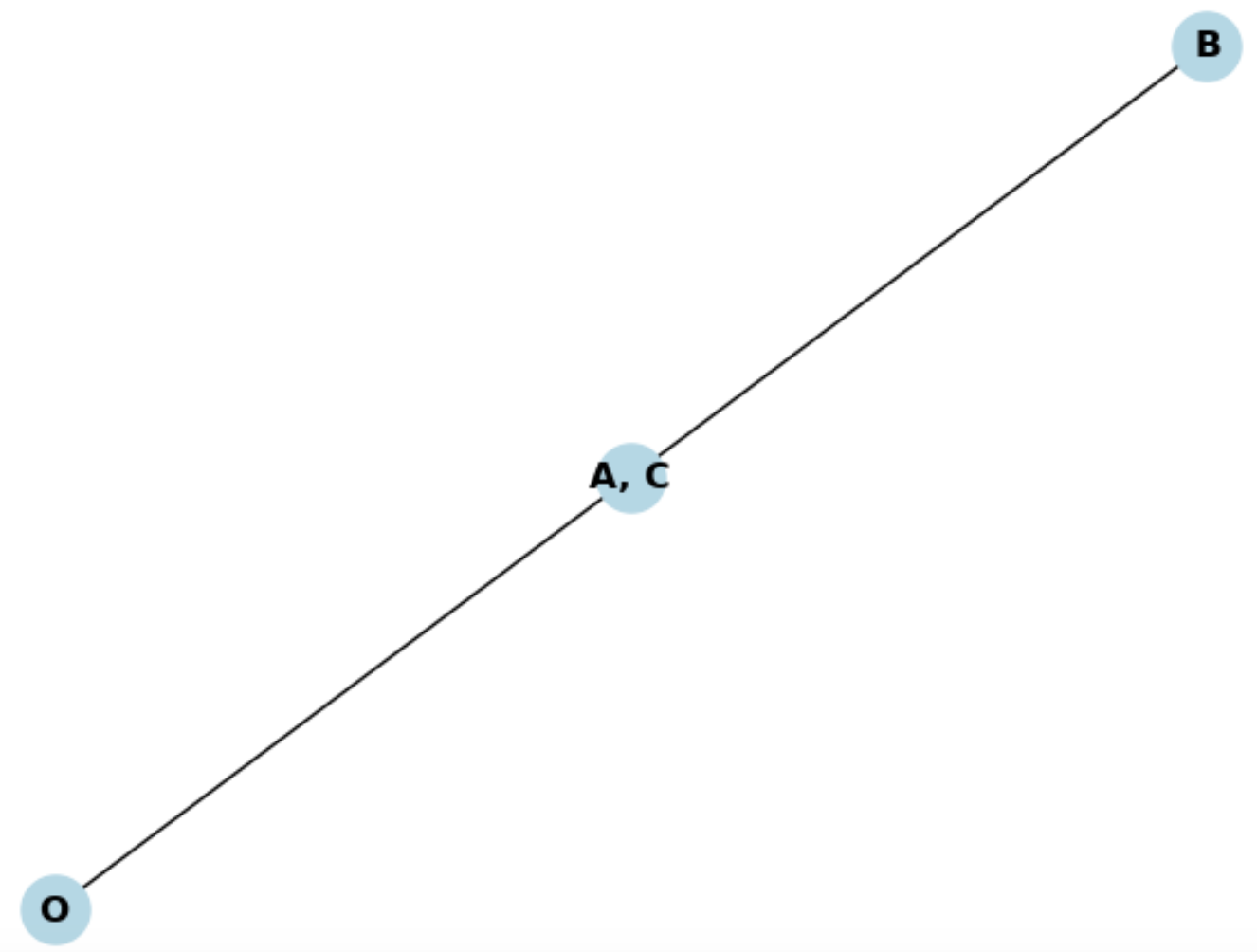}
        \caption{SSA}
        \label{fig:ssa}
    \end{subfigure}\hfill
    \caption{The star graphs corresponding to MMI (a) and SSA(b) with boundary conditions. One may start with (a), identify the vertices $A\leftrightarrow C$ and contract the edge with the central vertex of the star to get (b).}
    \label{fig:mmi-to-ssa}
\end{figure}

Similarly, we can also get the subadditivity, weak monotonicity and Araki-Lieb inequalities with the appropriate assignment of boundary conditions. We will explore this direction for higher-party contraction maps in a future work.

\section{A collection of non-facet HEIs from star graphs}
In this section, we will report a new collection of non-facet HEIs constructed from star graphs. We consider the contraction of a hypercube $H_M$ such that all bitstrings with odd number of $1$s are identified as a single vertex. This gives us a star-graph with a center vertex and $2^{M-1}$ edges coming out and joining vertices at unit Hamming distance from the center vertex. This star graph can be isometrically embedded in a $2^{M-1}$-dimensional hypercube. We will now describe the construction of the corresponding contraction map. The $\{0\}^M$ bitstring in LHS is mapped to the bitstring $\{0\}^{2^{(M-1)}}$ in RHS. Since, it is at unit distance from all the bitstrings with odd number of $1$s, those bitstrings can only map to a RHS bitstring having only one $1$ and rest $0$s. We choose the zeroth\footnote{We choose the convention where indices run from $0$ to $(2^{M-1}-1)$ from left to right. However, for the decimal conversion, we adopt the convention of increasing place values from right to left.} bit to be $1$, i.e., the RHS bitstring $\{1,0, \cdots , 0\}$. All LHS bitstrings with even number of $1$s are at unit Hamming distance from the central vertex, and twice the unit Hamming distance from each other. Therefore, the corresponding RHS bitstrings could only have at most one more $1$ at a different position from the zeroth bit. We suggest that for a LHS bitstring having a decimal equivalent $D$, the $[D/2]^{\text{th}}$ bit is assigned $1$, where $[\cdot]$ is the integer-function. This gives us a valid contraction map. We summarize the construction in algorithm \ref{alg:star-maps}.

\begin{algorithm}
    \caption{Algorithm to generate the contraction map of star graphs.}
    \label{alg:star-maps}
    \begin{algorithmic}[1]
        \Procedure{Constructing star graphs}{}
            \State A hypercube $H_M$ with vertices labeled by bitstrings $\{0,1\}^M$.
            \State Identify all vertices with bitstrings having odd number of $1$s as a single vertex.
            \State The resultant graph is a star graph with $2^{M-1}$ edges from the center vertex.
        \EndProcedure
        \State
        \Procedure{Constructing contraction maps}{}
            \State Initialize the \texttt{LHS} array to $2^M$ bitstrings canonically labeled by $\{0,1\}^M$.
            \State Initialize the \texttt{RHS} array to $2^M$ bitstrings by all $0$s, i.e., $\{0\}^{2^{(M-1)}}.$
            \For{$i = 1;\ i < 2^M;\ i++$}
            \State Assign the zeroth bit of the $i$-th \texttt{RHS} bitstring to $1$.
            \If{$i$-th \texttt{LHS} bitstring has even number of $1$s}:
            \State Compute the decimal equivalent of the $i$-th LHS bitstring, call it $D$.
            \State Assign the $[D/2]$-th bit of the $i$-th \texttt{RHS} bitstring to $1$.
            \EndIf
            \EndFor
        \EndProcedure
    \end{algorithmic}
\end{algorithm}

We now can employ algorithm \ref{alg:read-HEI} to start generating HEIs for an arbitrary number of parties $n$. In this case, however, we will relax the condition of non-triviality of columns and instead demand that all the configurations of initial conditions are drawn only from the set of vertices with even number of $1$s in the LHS bitstrings. One of the simplest example is the MMI contraction map (see table \ref{tab:mmi-mapv2}). By construction, the RHS of the inequality for $n<N$ parties has the structure,
\begin{equation}
\sum_{i=1}^{n}S_{A_i}+ S_{A_1\dots A_n}.
\end{equation}
We will now give an example with $M=5$ (see figure \ref{fig:H5-star-graph}). For example, an inequality for $n=10$ parties $(A,B,C,D,E,F,G,H,I,J)$, is as follows\footnote{This is the only balanced inequality possible with $n=10$ parties. For $n\geq 11$, no balanced inequality can be constructed using this graph. We have imposed a restriction to not choose a boundary subregion from the center vertex.},
\begin{equation}
    \begin{aligned}
        &S_{ABDG}+S_{ACEH}+S_{BCFI}+S_{DEFJ}+S_{GHIJ} \geq \\ 
        &S_{A}+S_{B}+S_{C}+S_{D}+S_{E}+S_{F}+S_{G}+S_{H}+S_{I}+S_{J}+S_{ABCDEFGHIJ}
    \end{aligned}
\end{equation}

Similarly, we can find other star-graph inequalities. In this example, we took the star-graph as an illustrative case. We will provide a complete demonstration of our framework, using all possible graphs for a fixed $M$, in a future work.

\begin{figure}[h!]
    \centering
    \includegraphics[width=\linewidth]{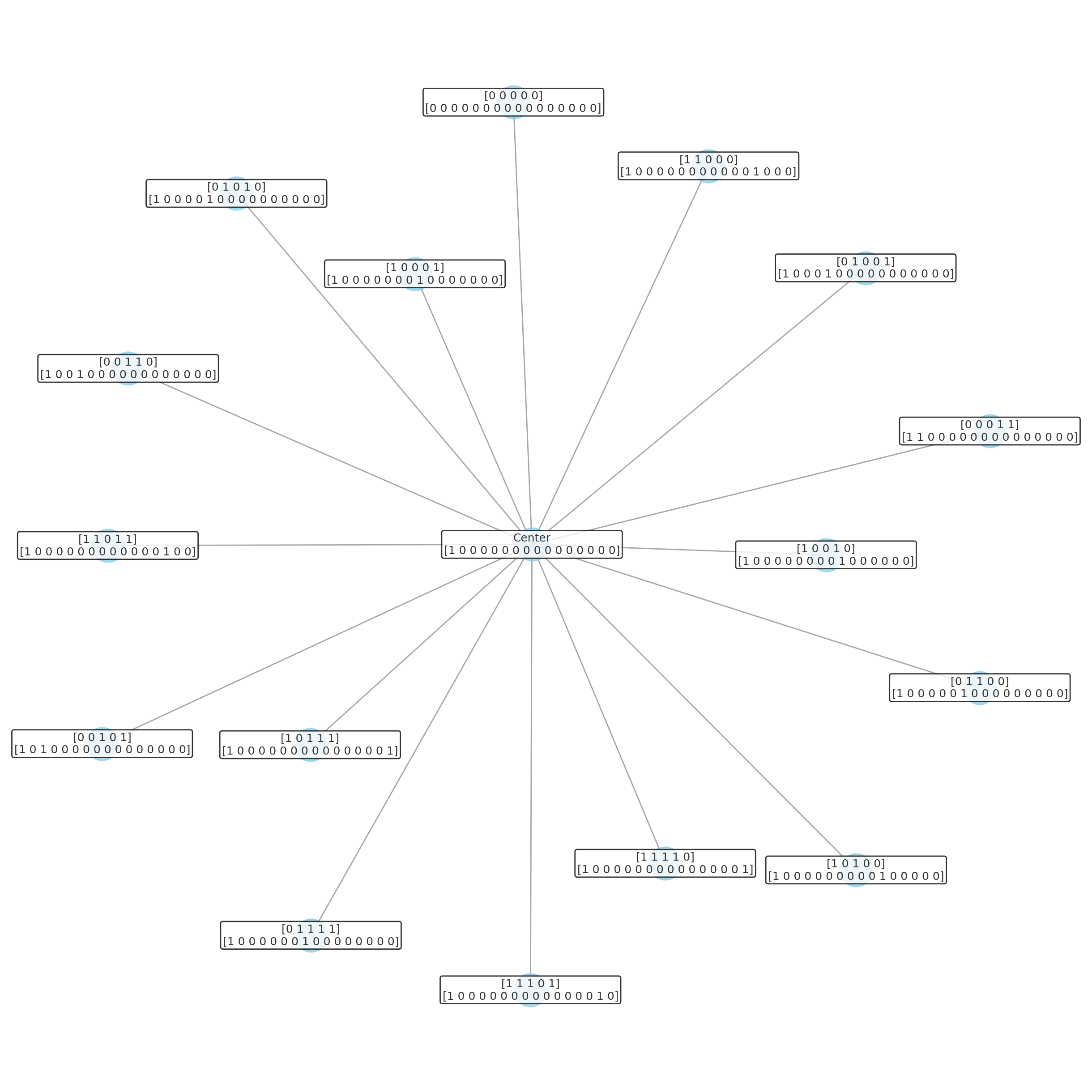}
    \caption{The star graph constructed from $H_5$. All bitstrings with odd number of $1$s are identified with the center vertex. The corresponding contraction map is also labeled.}
    \label{fig:H5-star-graph}
\end{figure}

\bibliographystyle{JHEP}
\bibliography{main}

\end{document}